\title{Simple Sublinear Algorithms for \texorpdfstring{$(\Delta+1)$}{(Δ+1)} Vertex Coloring via Asymmetric Palette Sparsification}
\definecolor{DarkRed}{rgb}{0.5,0.1,0.1}
\definecolor{DarkBlue}{rgb}{0.1,0.1,0.5}
\definecolor{ForestGreen}{rgb}{0.1333,0.5451,0.1333}
\definecolor{Red}{rgb}{0.9,0,0}
\crefname{observation}{Observation}{Observations}
\crefname{claim}{Claim}{Claims}
\renewcommand{\leq}{\leqslant}
\renewcommand{\geq}{\geqslant}
\newcommand{\Ot}{\ensuremath{\widetilde{O}}}
\newcommand{\bracket}[1]{\left[#1\right]}
\newcommand{\paren}[1]{\ensuremath{\left(#1\right)}\xspace}
\newcommand{\card}[1]{\left\vert{#1}\right\vert}
\newcommand{\IN}{\ensuremath{\mathbb{N}}}
\newcommand{\expect}[1]{\Exp\bracket{#1}}
\newcommand{\set}[1]{\ensuremath{\left\{ #1 \right\}}}
\newcommand{\poly}{\mbox{\rm poly}}
\newcommand{\polylog}{\mbox{\rm  polylog}}
\DeclareMathOperator*{\Exp}{\ensuremath{{\mathbb{E}}}}
\DeclareMathOperator*{\Prob}{\ensuremath{\textnormal{Pr}}}
\renewcommand{\Pr}{\Prob}
\newenvironment{tbox}{\begin{tcolorbox}[
		enlarge top by=5pt,
		enlarge bottom by=5pt,
		 breakable,
		 boxsep=0pt,
                  left=4pt,
                  right=4pt,
                  top=10pt,
                  arc=0pt,
                  boxrule=1pt,toprule=1pt,
                  colback=white
                  ]
	}
{\end{tcolorbox}}
\newcommand{\II}{\ensuremath{\mathbb{I}}}
\newcommand{\mireal}[1][]{
  \ifx\relax#1\relax%
    \II(\mione \,; \mitwo)%
  \else%
    \II(\mione \,; \mitwo\mid #1)%
  \fi
}
\begin{document}

\maketitle
\begin{abstract}
The \textbf{palette sparsification theorem (PST)} of Assadi, Chen, and Khanna (SODA 2019) states that in every graph $G$ with maximum degree $\Delta$, sampling a list of $O(\log{n})$ colors from 
$\{1,\ldots,\Delta+1\}$ for 
every vertex independently and uniformly, with high probability, allows for finding a $(\Delta+1)$ vertex coloring of $G$ by coloring each vertex only from its sampled list. PST naturally leads to a host of \textbf{sublinear algorithms} 
for $(\Delta+1)$ vertex coloring, including in semi-streaming, sublinear time, and MPC models, which are all proven to be nearly optimal, and in the case of the former two are the only known sublinear algorithms for this problem. 

While being a quite natural and simple-to-state theorem, PST suffers from two drawbacks. Firstly, all its known proofs require technical arguments that rely on sophisticated graph decompositions and probabilistic arguments. 
Secondly, finding the coloring of the graph from the sampled lists efficiently requires a considerably complicated algorithm. 

We show that a natural \emph{weakening} of PST addresses both these drawbacks while still leading to sublinear algorithms of similar quality (up to polylog factors). 
In particular, we prove an \textbf{asymmetric palette sparsification theorem (APST)} that allows for list sizes of the vertices to have different sizes and only bounds the \emph{average} size of these lists. 
The benefit of this weaker requirement is that we can now easily show the graph can be $(\Delta+1)$ colored from the sampled lists using the standard greedy coloring algorithm. This way, we can recover 
nearly-optimal bounds for $(\Delta+1)$ vertex coloring in all the aforementioned models using algorithms that are much simpler to implement and analyze. 
\end{abstract}

\section{Introduction}\label{sec:intro}

Let $G=(V,E)$ be any $n$-vertex graph with maximum degree $\Delta$. A basic graph theory fact is that vertices of $G$ can be colored with $\Delta+1$ colors such that no edge is monochromatic. This can be done using 
a textbook greedy algorithm: iterate over vertices of $G$ in any arbitrary order and for each vertex greedily find a color that has not been used in its neighborhood, which is guaranteed to exist by the pigeonhole principle. 
This algorithm is quite simple and efficient as it runs in linear time and space. But, can we design even more efficient algorithms? 

Assadi, Chen, and Khanna~\cite{AssadiCK19} addressed this question by designing various \emph{sublinear} algorithms for the $(\Delta+1)$ vertex coloring problem. These are algorithms whose resource requirements 
are significantly smaller than their input size. Some canonical examples include $(a)$ \emph{graph streaming} algorithms~\cite{FeigenbaumKMSZ04} that process graphs by making one pass 
over their edges and using a limited space; $(b)$ \emph{sublinear time} algorithms~\cite{ChazelleRT01} that process graphs by 
querying its adjacency list/matrix and using limited \emph{time}; or $(c)$ \emph{Massively Parallel Computation (MPC)} algorithms~\cite{KarloffSV10} that process graphs 
in synchronous rounds and a distributed manner over machines with limited \emph{communication} (see~\Cref{sec:sublinear} for definitions). 

The key contribution of~\cite{AssadiCK19} was proving the following (purely combinatorial) theorem about $(\Delta+1)$ coloring, termed the \textbf{palette sparsification theorem (PST)}. 

\begin{theorem}[\textbf{Palette Sparsification Theorem (PST)}~\cite{AssadiCK19}]\label{thm:pst}
	For any graph $G=(V,E)$ with maximum degree $\Delta$, if we sample $O(\log{n})$ colors $L(v)$ for each vertex $v \in V$ independently and uniformly at random
	from the  colors $\set{1,2,\ldots,\Delta+1}$, then, with high probability, $G$ can be colored so that every vertex $v$ chooses its color from its own list $L(v)$. 
\end{theorem}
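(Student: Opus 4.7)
My plan is to prove the PST via a \emph{sparse--dense decomposition} of $G$ into (i) a set $V_s$ of sparse vertices, each of whose neighborhood contains $\Omega(\Delta^2)$ non-edges, and (ii) a family of vertex-disjoint \emph{almost-cliques} $K_1,\ldots,K_t$ of size $\approx \Delta+1$ each that cover the remaining vertices. Such a decomposition (in the spirit of Reed's structural results used throughout modern $(\Delta+1)$-coloring work) is a standard first step, and it reduces the problem to showing that each piece can be colored from the random sampled lists $L(v)$, while the two pieces are coupled only through the edges bridging them.

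For sparse vertices, the argument is the easier half. If one colors $V \setminus V_s$ first and then extends greedily to $V_s$, each $v \in V_s$ has $\Omega(\Delta)$ colors from $\{1,\ldots,\Delta+1\}$ left available: every non-edge inside $N(v)$ gives its endpoints an opportunity to share a color, and $\Omega(\Delta^2)$ non-edges translate into $\Omega(\Delta)$ such collisions in expectation and, via standard concentration, with high probability. A random list of size $c\log n$ then hits an available color except with probability $(1-\Omega(1))^{c \log n} \leq n^{-10}$, and a union bound over $V_s$ disposes of the sparse part.

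For an almost-clique $K$, the difficulty is that $|K| \approx \Delta+1$ forces almost every color in $\{1,\ldots,\Delta+1\}$ to be used inside $K$, leaving essentially no slack and making the coloring a matching-style assignment. I would handle this by constructing, for each $K$, the bipartite graph $H_K$ between $K$ and $\{1,\ldots,\Delta+1\}$ whose edges record, after the external (non-$K$) coloring is fixed, which colors in $L(u)$ remain legal for each $u \in K$, and then exhibiting a perfect matching of $K$ in $H_K$ by verifying Hall's condition with high probability. This should follow by combining (a) the few anti-edges per vertex inside $K$, which give a small amount of local slack; (b) the independence and size of the random lists $L(u)$; and (c) a careful coupling with a random partial coloring of external neighbors of $K$ to bound, for each $u$, the number of colors $L(u)$ loses to external conflicts.

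The \textbf{main obstacle} is precisely the almost-clique case: one must simultaneously reason about the randomness of $L(\cdot)$ and of the external coloring, and control correlations across neighboring almost-cliques that share external neighbors, so that the Hall condition in $H_K$ fails for no subset of any $K_i$. This is where all the technical complexity of~\cite{AssadiCK19} lives, and it is exactly the step that this paper sidesteps by weakening the conclusion to an asymmetric version in which a few vertices may receive larger lists, relaxing the tight Hall-style bottleneck into something that a standard greedy argument can resolve directly.
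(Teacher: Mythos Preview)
The paper does not prove \Cref{thm:pst}. It is quoted from~\cite{AssadiCK19} purely as background and motivation; the entire point of the present paper is to \emph{avoid} proving PST by replacing it with the weaker asymmetric variant (\Cref{thm:apst-deg+1} and \Cref{thm:apst}), whose proof is elementary. There is therefore no ``paper's own proof'' of this statement to compare your proposal against.

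That said, your outline is a fair high-level summary of the strategy in~\cite{AssadiCK19} (sparse--dense decomposition followed by separate treatment of sparse vertices and almost-cliques), and you correctly identify the almost-clique case as the technical bottleneck---and correctly note that this is exactly what the present paper sidesteps. One substantive gap in your sparse-vertex sketch: coloring $V\setminus V_s$ first and then greedily extending to $V_s$ does not by itself guarantee $\Omega(\Delta)$ collisions among the neighbors of a sparse $v$; an adversarial coloring of $N(v)$ could use all-distinct colors regardless of how many non-edges $N(v)$ contains. In the actual PST proofs the collisions are produced by a \emph{random} partial coloring phase (vertices tentatively try random colors from their lists and keep them if unconflicted), and it is the randomness of that phase---not merely the combinatorial sparsity of $N(v)$---that yields the $\Omega(\Delta)$ slack with high probability. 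Your plan would need to make this step explicit before the greedy extension argument goes through.
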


The authors in \cite{AssadiCK19} used PST to design sublinear algorithms in all the aforementioned models using the following ``sparsifying'' nature of this result: Firstly, it is easy to see that to color $G$ from the sampled 
lists, we can ignore all edges $(u,v) \in E$ where $L(u) \cap L(v) = \emptyset$ since they do not create any conflict under any possible coloring. Secondly, for any edge $(u,v) \in E$, the probability that it still remains conflicting---the probability that two independently chosen $O(\log{n})$-lists from $(\Delta+1)$ colors intersect---is $O(\log^2\!{(n)}/\Delta)$. Combining these with the upper bound of $n\Delta/2$ 
on the number of edges in $G$ implies that in expectation (and even with high probability) there are only $O(n\log^2\!{(n)})$  ``important'' edges that the algorithm should consider. 
The sublinear algorithms can now 
be obtained from this in a simple way\footnote{For instance, for a streaming algorithm, first sample all the $O(n\log{n})$ colors, and during the stream whenever an edge arrives, see if it is an important edge and if so store it. 
At the end, use~\Cref{thm:pst} to color the graph using these stored edges. This leads to a single-pass streaming algorithm with $O(n\log^3\!{(n)})$ bits of space for $(\Delta+1)$ coloring.}.

Since its introduction in~\cite{AssadiCK19}, PST and its variants and generalizations have been studied in sublinear algorithms~\cite{ChangFGUZ19,BeraCG20,AssadiKM22,ChakrabartiGS22,AssadiCGS23}, distributed computing~\cite{FischerHM23,FlinGHKN23,HalldorssonKNT22}, and discrete mathematics~\cite{AlonA20,AndersonBD22,KahnK23,KahnK24,HefetzK24,AshvinkumarK24}. 

Despite its long list of applications, and perhaps even due to its generality, PST suffers from two drawbacks. Firstly, the proof of PST is considerably complicated and technical; it goes through 
the so-called \emph{sparse-dense decomposition} of graphs due to~\cite{Reed98} (and the variant proved in~\cite{AssadiCK19} itself, building on~\cite{HarrisSS16}), and then a detailed three-phase approach
for coloring each part in the decomposition differently using various probabilistic and random graph theory arguments\footnote{All known proofs of PST for $(\Delta+1)$ coloring (or so-called $(\deg+1)$-(list) 
coloring)~\cite{AssadiCK19,AlonA20,HalldorssonKNT22,KahnK23,FlinGHKN24,AshvinkumarK24} follow this decomposition plus three-phase approach and it was even pointed out in~\cite{KahnK23} that: ``something of this type [...] seems more or less
unavoidable''.}. Secondly, PST, as stated, is a purely combinatorial theorem and not an algorithmic one, and thus
to obtain the coloring of $G$, one needs to run a rather complicated and non-greedy post-processing algorithm to find the coloring of $G$ from the sampled lists (given also the decomposition of the graph; see~\cite{AssadiCK19} for more details). 

\paragraph{Our contribution.} We show that introducing a simple \emph{asymmetry} in the definition of PST leads to the same colorability guarantee but this time using a much simpler proof and algorithm. 

\begin{quote} 
\textbf{Asymmetric Palette Sparsification Theorem (APST).}
	For any graph $G=(V,E)$ with maximum degree $\Delta$, there is a distribution on list-sizes $\ell: V \rightarrow \IN$ (depending only on vertices $V$ and \underline{not} edges $E$) 
	such that an \underline{average} list size is $O(\log^2\!{(n)})$ and the following holds. With high probability, 
	if we sample $\ell(v)$ colors $L(v)$ for each vertex $v \in V$ independently and uniformly at random
	from colors $\set{1,2,\ldots,\Delta+1}$, then, with high probability, the greedy coloring algorithm that processes vertices in some fixed order (determined by vertex degrees and list sizes\footnote{For $\Delta$-regular
	graphs, this translates to processing vertices in the increasing order of their list sizes.}) 
	finds a proper coloring of $G$ by coloring each $v$ from its own list $L(v)$. 
\end{quote} 

The benefit of our APST compared to the original PST of~\cite{AssadiCK19} (and all its other alternative variants in~\cite{AlonA20,HalldorssonKNT22,KahnK23,FlinGHKN24,AshvinkumarK24}) is twofold: it admits a 
\emph{significantly} simpler proof, while also allowing for coloring from the sampled list using the standard greedy algorithm itself. At the same time, it is also weaker than the original PST 
on two fronts: it requires $O(\log^2\!{(n)})$ list sizes per vertex as opposed to $O(\log{n})$ (which is similar to~\cite{HalldorssonKNT22,FlinGHKN24}) but much more importantly, it allows the vertices to have much larger 
list sizes in the worst-case and only bounds the \emph{average} list sizes (this is different from all prior work on PST, and was inspired by the recent work of~\cite{FlinM24} on the communication complexity of $(\Delta+1)$ coloring). We note that asymmetric list sizes are \emph{necessary} if we would like to color the graph \emph{greedily} from the sampled colors\footnote{Consider 
coloring a $(\Delta+1)$-clique. When coloring the last vertex $v$ of the clique in the greedy algorithm, there is only one color that can be assigned to $v$ but to ensure this color is sampled by $v$ even with a constant probability, we need $L(v)$ to have 
size $\Omega(\Delta)$. Our APST addresses this by allowing vertices that are colored later in the greedy algorithm to also sample more colors, while earlier vertices can stick with smaller list sizes.}. 

Finally, APST allows for recovering the same type of sublinear algorithms as in~\cite{AssadiCK19} with only an extra $\poly\!\log\!{(n)}$ overhead in their resources. 
In particular, we obtain the following sublinear algorithms for $(\Delta+1)$ vertex coloring: 
\begin{itemize}
	\item A randomized \textbf{graph streaming} algorithm with $\Ot(n)$ space\footnote{Throughout, we use $\Ot(f) := O(f \cdot \poly\log{(f)})$ to suppress polylog factors.} and a single pass over the input.  
	\item A randomized \textbf{sublinear time} algorithm with $\Ot(n^{1.5})$ time, given (non-adaptive) query access to both adjacency list and matrix of the input (also called the general query model). 
	\item A randomized \textbf{MPC} algorithm with $\Ot(n)$ memory per machine and $O(1)$ rounds. 
\end{itemize}
The number of passes in our streaming algorithm is clearly optimal. The number of rounds in the MPC algorithm is
also asymptotically optimal (in fact, it is only two rounds, and even one round assuming access to public randomness). It was also proven in~\cite{AssadiCK19} that 
the runtime of the sublinear time algorithm and the space of the streaming algorithm are nearly optimal up to $\poly\!\log{(n)}$ factors (for the streaming algorithm, storing the coloring itself requires this much space anyway). 
These constitute the simplest known sublinear algorithms for $(\Delta+1)$ vertex coloring.\footnote{To our knowledge, no other streaming nor sublinear time algorithms beside~\cite{AssadiCK19} have been developed for this problem but for MPC algorithms,~\cite{ChangFGUZ19,CzumajDP21} have subsequently presented other algorithms for this problem.} 

In conclusion, we find our APST to be a more ``algorithm friendly'' version of PST that still allows for recovering nearly optimal sublinear algorithms for $(\Delta+1)$ coloring. In addition, we hope that its proof can act as a gentle warm up and introduction to the original PST itself. 

Finally, in~\Cref{sec:warm-up}, we also present an even simpler sublinear \emph{time} algorithm for $(\Delta+1)$ coloring whose proof
is inspired by our APST but does not directly use it. While qualitatively weaker, the benefit of this algorithm is that it is entirely self-contained and only requires elementary probabilistic arguments (not even concentration inequalities), and 
thus can be even more ``classroom friendly'' than the algorithms obtained from our APST.


\newcommand{\degback}[1]{\ensuremath{\deg^{<}_{\pi}(#1)}}
\newcommand{\Nback}[1]{\ensuremath{N^{<}_{\pi}(#1)}}

\section{Preliminaries}\label{sec:prelim}

\paragraph{Notation.} For any integer $t \geq 1$, define $[t] := \set{1,2,\ldots,t}$. For a graph $G=(V,E)$, we use $n$ to denote the number of vertices and for each $v \in V$, use $N(v)$ to denote the neighbors of $v$ and $\deg(v)$ as its degree. 

We say a probabilistic event happens \emph{with high probability}, if it happens with probability at least $1-1/\poly{(n)}$, where $n$ is the number of vertices in the graph which will be clear from the context. 
We also say a probabilistic event happens \emph{with certainty}, if it happens with probability $1$. 

\paragraph{Concentration inequalities.} 

A \textbf{hypergeometric} random variable with parameters $N$, $K$, and $M$ is a discrete random variable in $\IN$ distributed as follows: we have $N$ elements, $K$ of them are marked as `good', 
and we sample $M$ elements uniformly at random and \emph{without} replacement and count the number of good samples. We use a standard result on the concentration of hypergeometric 
random variables. 

\begin{proposition}[cf.~{\cite[Theorem 2.10]{JansonLR11}}]\label{thm:book}
Suppose $X$ is a hypergeometric random variable with parameters $N,K,M$ and thus the expectation $\expect{X} = M \cdot K/N$. Then, for any $t \geq 0$
\[
\Pr(X \leq \expect{X}-t) \leq \exp\paren{-\frac{t^2}{2\expect{X}}}.
\]
\end{proposition}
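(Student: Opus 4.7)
The plan is to obtain this concentration bound by reducing to the standard Cram\'er--Chernoff inequality for the binomial distribution. The key ingredient I would invoke is the classical comparison of Hoeffding (1963): for every convex function $\phi : \IR \to \IR$,
\[
\expect{\phi(X)} \leq \expect{\phi(Y)} \qquad \text{where } Y \sim \mathrm{Bin}(M, K/N),
\]
so the binomial with the same mean $\mu = MK/N$ dominates the hypergeometric in the convex order. The intuition is that sampling $M$ items without replacement is ``less random'' than sampling $M$ iid draws from the marginal distribution and is therefore more concentrated; formally, one represents $X$ as a suitable conditional expectation of a function of $Y$ (by symmetrizing over permutations of the $N$-element population) and invokes Jensen's inequality to move $\phi$ outside the conditional expectation.

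With this comparison in hand, the remainder mimics the usual Chernoff derivation. For any $s < 0$, Markov's inequality applied to $e^{sX}$ yields $\Pr(X \leq \mu - t) \leq e^{-s(\mu - t)} \expect{e^{sX}}$. Since $x \mapsto e^{sx}$ is convex, Hoeffding gives $\expect{e^{sX}} \leq \expect{e^{sY}}$, and the textbook binomial moment generating function bound is $\expect{e^{sY}} = \paren{1 + (K/N)(e^s - 1)}^M \leq \exp\!\paren{\mu(e^s - 1)}$. Combining these,
\[
\Pr(X \leq \mu - t) \leq \exp\!\paren{-s(\mu - t) + \mu(e^s - 1)}.
\]

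The last step is to optimize over $s$. We may assume $0 \leq t \leq \mu$, since $X \geq 0$ makes the event empty when $t > \mu$. Setting $e^s = 1 - t/\mu$ (so $s \leq 0$) turns the exponent into $\mu \cdot h(t/\mu)$ where $h(\delta) := -\delta - (1-\delta)\log(1-\delta)$, and the elementary inequality $(1-\delta)\log(1-\delta) \geq -\delta + \delta^2/2$ valid on $[0,1]$ reduces $h(\delta) \leq -\delta^2/2$, producing the claimed bound $\exp\!\paren{-t^2/(2\mu)}$.

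The main obstacle is the Hoeffding convex-order comparison in the first step: establishing the domination $\expect{\phi(X)} \leq \expect{\phi(Y)}$ rigorously requires either a symmetrization-based conditional expectation argument or an exchangeability coupling, and is the only nontrivial piece. Everything afterward is the classical Chernoff template applied verbatim to the binomial. Given that the statement is cited from the Janson--\L uczak--Rucinski textbook, in practice one would simply quote Hoeffding's lemma as a black box rather than rederive it inside this proposition.
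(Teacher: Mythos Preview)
The paper does not prove this proposition at all: it is stated with a citation to the Janson--{\L}uczak--Ruci\'nski textbook and used as a black box. Your proof is correct and is essentially the standard argument one finds in that reference (Hoeffding's convex-order domination of the hypergeometric by the binomial, followed by the Cram\'er--Chernoff method and the elementary bound $(1-\delta)\log(1-\delta) \geq -\delta + \delta^2/2$), so there is nothing to compare---you have supplied a valid proof where the paper simply cites one.
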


\subsection{Sublinear Models Considered in this Paper}\label{sec:sublinear}

For completeness, we present a brief definition of each of the computational models that we consider. 

\paragraph{Graph streaming.} In this model, the input graph $G=(V,E)$ is presented to the algorithm as a stream of its edges ordered in some arbitrary manner. 
The algorithm makes a single pass (or sometimes multiple ones) over this stream while using $\Ot(n)$ memory and at the end of the stream, outputs a solution to the problem on the input graph $G$.  
See~\cite{FeigenbaumKMSZ04} for more details. 

\paragraph{Sublinear time.} In this model, the input graph $G=(V,E)$ is presented to the algorithm via query access to its adjacency list and matrix or alternatively speaking, using degree-, neighbor-, and pair-queries. 
The algorithm can make its queries adaptively or non-adaptively and then at the end outputs a solution to the problem on the graph $G$. See~\cite{ChazelleRT01} for more details. 

\paragraph{MPC.} In this model, the input graph $G=(V,E)$ is originally partitioned across multiple machines, each with $\Ot(n)$ memory. Computation happens in synchronous rounds wherein each machine can send and receive 
$\Ot(n)$-size messages. After the last round, one designated machine outputs a solution to the problem on the graph $G$. See~\cite{KarloffSV10} for more details. 


\newcommand{\Nforward}[1]{\ensuremath{N^{>}_{\pi}(#1)}}
\newcommand{\degForward}[1]{\ensuremath{\deg^{>}_{\pi}(#1)}}

\section{Asymmetric Palette Sparsification}\label{sec:apst}

We present our APST in a more general form that allows for coloring vertices from arbitrary palettes of size proportional to their individual degrees---instead of the same original palette of $\set{1,2,\ldots,\Delta+1}$ described earlier--- namely, the $(\deg+1)$-list coloring problem. We then obtain the APST for $(\Delta+1)$ coloring as a simple corollary of this result. Furthermore, we remark that a PST version of $(\deg+1)$-list coloring (by sampling $\polylog{(n)}$ colors per vertex in the worst-case) 
has been previously established in~\cite{HalldorssonKNT22}; see also~\cite{AlonA20}. 

\begin{theorem}[\textbf{Asymmetric Palette Sparsification Theorem (for ($\deg+1$)-list coloring)}]\label{thm:apst-deg+1}
	Let $G=(V,E)$ be any $n$-vertex graph together with lists $S(v)$ with $\deg(v)+1$ arbitrary colors for each vertex $v\in V$.
	Sample a random permutation $\pi: V \rightarrow [n]$ uniformly and define 
	\[
	\ell(v) := \min \left( \deg(v)+1\, , \,  \frac{40\,n\ln{n}}{\pi(v)}\right),
	\]
	for every $v \in V$ as the size of the list of colors to be sampled for vertex $v$. Then,  
	\begin{itemize}
		\item \emph{\textbf{List sizes:}} $\sum_{v \in V} \ell(v) = O(n\log^2\!{(n)})$ with certainty; and, for any fixed vertices $u \neq v \in V$, 
		\[
		\expect{\ell(v)} = O(\log^2\!{(n)}) \quad \text{and} \quad \expect{\ell(u) \cdot \ell(v)} = O(\log^{4}\!{(n)}). 
		\] 
		\item \emph{\textbf{Colorability:}}
		If for every vertex $v \in V$ we sample a list $L(v)$ of $\ell(v)$ colors from $S(v)$ uniformly and independently, then, with high probability (over the randomness of $\ell$ and sampled lists) the greedy algorithm that iterates over vertices $v \in V$ in the decreasing order of $\pi(v)$ finds a proper list-coloring of $G$ from the lists 
		$\set{L(v) \mid v \in V}$. 
	\end{itemize}
	
\end{theorem}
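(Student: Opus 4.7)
Since $\ell(v) \le 40n\ln n/\pi(v)$ always and $\pi$ is a bijection onto $[n]$, the deterministic sum telescopes to $\sum_v \ell(v) \le 40n\ln n \cdot \sum_{i=1}^n 1/i = O(n \log^2 n)$. For the marginal expectation, $\pi(v)$ is uniform on $[n]$, so $\Exp[1/\pi(v)] = H_n/n$ and thus $\Exp[\ell(v)] \le 40\ln n \cdot H_n = O(\log^2 n)$. For the pair expectation, $(\pi(u), \pi(v))$ is uniform over ordered distinct pairs in $[n]^2$; the identity $\sum_{i \neq j} 1/(ij) = H_n^2 - \sum_i 1/i^2 \le H_n^2$ gives $\Exp[1/(\pi(u)\pi(v))] \le H_n^2/(n(n-1))$, whence $\Exp[\ell(u)\ell(v)] \le (40n\ln n)^2 \cdot H_n^2/(n(n-1)) = O(\log^4 n)$.

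\textbf{Plan for the colorability part.} I would fix a vertex $v$, bound the probability that greedy fails at $v$, and then close with a union bound over $V$. If $\ell(v) = \deg(v)+1$ (Case~A), then $L(v) = S(v)$, and since at most $\deg(v) < |S(v)|$ colors have been used by $v$'s already-colored neighbors, greedy trivially finds a free color; this case needs no randomness. In the interesting case, $\ell(v) = 40n\ln n/\pi(v) < \deg(v)+1$ (Case~B), which forces $\pi(v) > 40n\ln n/(\deg(v)+1)$. Unpacking the definition of $\ell$, a neighbor $u \in N(v)$ is ``backward'' (i.e., $\ell(u) < \ell(v)$) iff either $\deg(u) < \ell(v) - 1$ (a $\pi$-independent ``low-degree'' condition), or $\deg(u) \ge \ell(v) - 1$ together with $\pi(u) > \pi(v)$; write $H(v)$ for the number of high-degree neighbors of $v$. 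Conditional on $\pi(v)$, the count of high-degree neighbors with $\pi(u) > \pi(v)$ is a hypergeometric random variable, since the $\pi$-values on $V\setminus\{v\}$ form a uniformly random bijection to $[n]\setminus\{\pi(v)\}$, and \Cref{thm:book} concentrates it around $H(v)(n-\pi(v))/(n-1)$. This supplies a high-probability upper bound on the backward degree $d_b(v)$ and a matching lower bound of order $H(v)\pi(v)/n$ on the number of still-free colors in $S(v)$. Since $L(v)$ is an independent uniformly random $\ell(v)$-subset of $S(v)$, the standard bound $\binom{b}{k}/\binom{a}{k} \le (b/a)^k$ for $b\le a$ gives that the probability $L(v)$ misses all free colors is at most $\exp(-\ell(v) H(v)\pi(v)/(n(\deg(v)+1)))$, which becomes $\exp(-\Theta(\log n))$ after substituting $\ell(v) = 40n\ln n/\pi(v)$ and (importantly) using that $H(v)/(\deg(v)+1)$ is bounded below by a constant.

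\textbf{Main obstacle.} The most delicate step I anticipate is making the above work uniformly for all vertices in Case~B, and in particular for those whose high-degree count $H(v)$ is small (e.g., vertices dominated by low-degree neighbors). For such $v$, the $\pi$-randomness alone cannot produce enough forward slack from high-degree neighbors, and one must instead argue that the $\deg(u)+1$-sized lists of the low-degree backward neighbors cannot together cover too much of $S(v)$, so many colors in $S(v)$ remain available for $v$ regardless of greedy's tie-breaking on these earlier vertices. Aligning the numerical constant $40$ so that this contribution, the hypergeometric deviation, and the final union bound over $V$ each stay comfortably below polylogarithmic factors in $n$ is where the specific form of the threshold $40n\ln n/\pi(v)$ in the definition of $\ell(v)$ is consumed. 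Once both subregimes of Case~B are handled and combined with Case~A and the list-size bounds above, \Cref{thm:apst-deg+1} follows.
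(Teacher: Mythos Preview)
Your list-size argument is correct and matches the paper's.

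For colorability you are working much harder than necessary, and the ``main obstacle'' you flag is an artifact of taking the greedy order too literally. The paper runs the entire analysis in \emph{decreasing $\pi$-order} (which it identifies, somewhat loosely, with increasing $\ell$-order). Under that order, the neighbors of $v$ that are still uncolored when $v$ is reached are exactly those $u\in N(v)$ with $\pi(u)<\pi(v)$, irrespective of $\deg(u)$; hence the number of available colors in $S(v)$ is at least $\deg^{<}_\pi(v)+1$, where $\deg^{<}_\pi(v):=|\{u\in N(v):\pi(u)<\pi(v)\}|$. Conditioned on $\pi(v)$, this count is a single hypergeometric random variable over \emph{all} of $N(v)$ (parameters $N=n-1$, $K=\pi(v)-1$, $M=\deg(v)$), so no high-degree/low-degree split is needed. \Cref{thm:book} then gives $\deg^{<}_\pi(v)\ge\deg(v)\,\pi(v)/(2n)$ with probability $1-n^{-\Omega(1)}$ in your Case~B, and substituting this together with $\ell(v)=40\,n\ln n/\pi(v)$ into your miss-probability bound already yields failure probability $n^{-\Omega(1)}$ at $v$; a union bound over $V$ finishes.

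Your dichotomy appears only because a low-degree neighbor $u$ can satisfy $\ell(u)=\deg(u)+1<\ell(v)$ even when $\pi(u)<\pi(v)$, so a strict increasing-$\ell$ order would process it before $v$. But the fix you sketch---that the palettes $S(u)$ of low-degree backward neighbors cannot jointly block too much of $S(v)$---does not go through in general: take a star centered at $v$ with leaves $u_1,\dots,u_\Delta$, set $S(v)=[\Delta+1]$ and $S(u_i)=\{i,\Delta+2\}$. Every leaf is low-degree ($\ell(u_i)=2$), greedy choosing the smallest available color assigns $i$ to each $u_i$, and only color $\Delta+1$ remains for $v$; since $\ell(v)\approx 40\ln n$ when $v$ is in Case~B, the random list $L(v)$ misses it with probability $1-O((\log n)/n)$. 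So the obstacle is genuine for the order you analyze, and it is precisely the switch to decreasing $\pi$-order that dissolves it: there, roughly a $\pi(v)/n$ fraction of the leaves land after $v$.
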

	\Cref{thm:apst-deg+1} follows immediately from~\Cref{clm:apst-list-sizes} and~\Cref{clm:apst-colorability} proven below. 
	\begin{lemma}[\textbf{List sizes}]\label{clm:apst-list-sizes}
		We have $\sum_{v \in V} \ell(v) = O(n\log^2\!{(n)})$ with certainty; and, for $u \neq v \in V$, 
		\[
		\expect{\ell(v)} = O(\log^2\!{(n)}) \qquad \text{and} \qquad \expect{\ell(u) \cdot \ell(v)} =O(\log^{4}\!{(n)}). 
		\]
	\end{lemma}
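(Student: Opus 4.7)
The plan is to exploit the fact that $\ell(v) \leq 40 n \ln n / \pi(v)$ together with the fact that $\pi$ is a uniformly random bijection from $V$ to $[n]$. The key tool throughout is the harmonic sum bound $\sum_{k=1}^{n} 1/k \leq \ln n + 1 = O(\log n)$. All three claims reduce to careful applications of this estimate; there is no real obstacle, just bookkeeping to make sure the uniform marginals over $\pi$ are handled correctly.

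For the deterministic bound, I would simply observe that since $\pi$ is a bijection $V \to [n]$, summing the trivial upper bound $\ell(v) \leq 40n\ln(n)/\pi(v)$ gives
\[
\sum_{v \in V} \ell(v) \leq 40 n \ln n \sum_{k=1}^{n} \frac{1}{k} = O(n \log^2 n),
\]
which already establishes the first part without any use of randomness.

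For $\expect{\ell(v)}$, the point is that for any \emph{fixed} vertex $v$, the random variable $\pi(v)$ is uniformly distributed over $[n]$. Therefore
\[
\expect{\ell(v)} \leq \expect{\frac{40 n \ln n}{\pi(v)}} = \frac{40 n \ln n}{n} \sum_{k=1}^{n}\frac{1}{k} = O(\log^2 n),
\]
which is the second claim.

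For the product bound on $\expect{\ell(u) \cdot \ell(v)}$ with $u \neq v$, the plan is to use that the pair $(\pi(u), \pi(v))$ is uniformly distributed over ordered pairs of distinct elements of $[n]$, so each pair $(i,j)$ with $i \neq j$ occurs with probability $1/(n(n-1))$. Bounding $\ell(u)\ell(v) \leq (40n\ln n)^2/(\pi(u)\pi(v))$ yields
\[
\expect{\ell(u)\cdot \ell(v)} \leq \frac{(40 n \ln n)^2}{n(n-1)} \sum_{\substack{i,j \in [n] \\ i \neq j}} \frac{1}{ij} \leq \frac{(40 n \ln n)^2}{n(n-1)} \left(\sum_{k=1}^{n} \frac{1}{k}\right)^{\!2} = O(\log^4 n),
\]
which completes the lemma. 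The only mild subtlety is recognizing that we can drop the constraint $i \neq j$ when upper-bounding the double sum by the square of the harmonic sum, and that the factor $1/(n(n-1))$ cancels the two factors of $n$ from the numerator up to a constant; otherwise the computation is purely routine.
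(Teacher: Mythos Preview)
Your proof is correct and follows essentially the same approach as the paper: bound $\ell(v)$ by $40n\ln n/\pi(v)$, use that $\pi$ is a uniformly random bijection, and reduce everything to harmonic sums. The only cosmetic differences are that the paper derives $\expect{\ell(v)}=O(\log^2 n)$ from the deterministic sum bound via symmetry rather than computing it directly, and handles the product bound over unordered pairs with the factor $1/\binom{n}{2}$ instead of ordered pairs with $1/(n(n-1))$; these are the same calculation up to a factor of $2$.
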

	\begin{proof}
		By the definition of $\ell(v)$, 
		\[
		\sum_{v \in V} \ell(v) \leq \sum_{v \in V} \frac{40\,n\ln{n}}{\pi(v)} = \sum_{i=1}^{n} \frac{40\,n\ln{n}}{i} = O(n\log^2{n}),
		\]
		by the sum of the harmonic series. This proves the deterministic bound on the total size of the lists as well as the first inequality for the expected size of each list, 
		given the distribution of list sizes is symmetric across vertices. For the second inequality, we have
        \begin{align*}
            \mathbb{E}\bracket{\ell(u) \cdot \ell(v)}  
            &\leq \sum_{i \neq j} \textnormal{Pr}(\pi(v)=i \, \wedge \, \pi(u)=j) \cdot \frac{(40\,n\ln{(n)})^2}{i\cdot j} \\
            &= \frac{1600\,n^2\ln^2{(n)}}{{n\cdot (n-1)}} \cdot \sum_{i =1}^{n} \sum_{j=1}^{i-1} \frac{1}{i \cdot j}\\
            &= O(\log^4(n)),
        \end{align*}
		again, by the sum of the harmonic series. This concludes the proof. 
	\end{proof}
	
	Establishing the colorability property is the main part of the proof. We first need some notation. 
	For any vertex $v \in V$, define $\Nback{v}$ as the neighbors $u$ of $v$ with $\pi(u) < \pi(v)$, namely the ones that are processed 
	\emph{after} $v$ by the greedy algorithm. Let  $\degback{v}:= \card{\Nback{v}}$. 
	
	An easy observation is that when coloring any vertex $v$ by the greedy algorithm (using the palette of $\Delta+1$ colors), there are still at least $\degback{v}+1$ colors not used in the neighborhood of $v$. 
	Thus, lower bounding $\degback{v}$ allows us to later prove that $L(v)$ contains an available color for $v$ with high probability.  
	We only need such a bound for the subset of vertices with $\ell(v) < \deg(v)+1$ as captured in the following claim. 
	
	\begin{claim}\label{clm:apst-a(v)}
		For all $v \in V$, 
		\[
		\Pr\paren{\degback{v} < \frac{\deg(v) \cdot \pi(v)}{2n} \,\middle\vert\, \pi(v) > \frac{40n\ln(n)}{\deg(v)+1}} \leq n^{-2.4}.
		\]
	\end{claim}
	\begin{proof}
		Fix any vertex $v$ as above and condition on $\pi(v) = i+1$ for some $i \in \set{0,1,\ldots,n-1}$. 
		For $j \in [\deg(v)]$, define $X_j \in \set{0,1}$ as the 
		indicator random variable which is $1$ iff the $j$-th neighbor of $v$ appears in the first $i$ vertices of $\pi$, and is therefore processed after $v$. 
		For $X$ being the sum of these $X_j$-variables (and implicitly conditioned on $\pi(v) = i+1$ to avoid cluttering the equations), 
		\[
		\expect{\degback{v} \mid \pi(v)=i+1} = \expect{X} = \sum_{j=1}^{\deg(v)} \expect{X_j} = \deg(v) \cdot \frac{i}{n-1},
		\]
		as each neighbor of $v$ appears in the first $i$ position of $\pi$ with probability $i/(n-1)$ conditioned on $\pi(v) = i+1$.  
		Random variable $X$ is distributed as a hypergeometric random variable with parameters $N=n-1$, $K = i$, and $M = \deg(v)$. Thus, by~\Cref{thm:book} for the parameter
		\begin{align}
		t &:=  \expect{X} - \frac{\deg(v)\cdot (i+1)}{2n} = \frac{\deg(v) \cdot i}{n-1} - \frac{\deg(v)\cdot (i+1)}{2n} \geq  (1-o(1)) \cdot \frac{\expect{X}}{2}  \tag{by the value of $\expect{X}$ calculated above and given $i = \omega(1)$ in the claim statement}
		\end{align} 
		conditioning on $\pi(v) > 40n\ln(n)/(\deg(v)+1)$ we have, 
		\begin{align*}
			\Pr\paren{X \leq \frac{\deg(v)\cdot (i+1)}{2n}} &= \Pr\paren{X \leq \expect{X} - t} \leq \exp\paren{-\frac{t^2}{2\expect{X}}} \leq \exp\paren{-(1-o(1)) \cdot \frac{\expect{X}}{8}} \tag{by the lower bound on $t$ established above} \\
			&= \exp\paren{-(1-o(1)) \cdot \frac{\deg(v) \cdot i}{8 \cdot (n-1)}} \leq \exp\paren{-(1-o(1)) \cdot \frac{40\ln{n}}{8}} \leq n^{-2.4},
		\end{align*}
		where in the second line we used the lower bound on $i = \pi(v)-1$ in the condition. 
		Since $\degback{v} = X$ and the bound holds for all choices of $\pi(v)$, we can conclude the proof.
	\end{proof}
	\begin{lemma}[\textbf{Colorability}]\label{clm:apst-colorability}
		With high probability, when coloring each vertex $v \in V$ in the greedy algorithm, 
		at least one of the colors sampled in $L(v)$ has not been used in the neighborhood of $v$; that is, the greedy algorithm can color this vertex. 
	\end{lemma}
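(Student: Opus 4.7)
The plan is to fix an arbitrary vertex $v$ and show that, with probability at least $1 - O(n^{-2})$, $L(v)$ contains some color not yet used on the already-colored neighbors of $v$; a union bound over $V$ then finishes the proof. First, I would split on which branch of the $\min$ defines $\ell(v)$. If $\pi(v) \leq 40n\ln(n)/(\deg(v)+1)$, then $\ell(v) = \deg(v)+1$, so $L(v) = S(v)$, and since $v$ has at most $\deg(v)$ colored neighbors out of a palette of size $\deg(v)+1$, pigeonhole gives an available color deterministically.

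In the remaining case $\ell(v) = 40n\ln(n)/\pi(v)$ and $\pi(v) > 40n\ln(n)/(\deg(v)+1)$, which is precisely the hypothesis of \Cref{clm:apst-a(v)}. I would condition on the permutation $\pi$ (fixing both the processing order and all list sizes), on the lists $L(u)$ for every $u$ with $\pi(u) > \pi(v)$, and on the resulting greedy choices for those predecessors. Under this conditioning, $L(v)$ remains a uniformly random size-$\ell(v)$ subset of $S(v)$, by independence across vertices in the list-sampling step. The crucial counting observation is that the colors used by already-colored neighbors form a set of size at most $\deg(v) - \degback{v}$, so at least $\degback{v}+1$ colors in $S(v)$ are still available for $v$. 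Hence the probability that $L(v)$ misses every available color is at most
\[
\binom{\deg(v) - \degback{v}}{\ell(v)} \Big/ \binom{\deg(v)+1}{\ell(v)} \leq \left(1 - \frac{\degback{v}+1}{\deg(v)+1}\right)^{\ell(v)}.
\]

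Now I would invoke \Cref{clm:apst-a(v)}: outside an event of probability at most $n^{-2.4}$ over $\pi$, one has $\degback{v} \geq \deg(v)\pi(v)/(2n)$, which (using $\deg(v)/(\deg(v)+1) \geq 1/2$) yields $(\degback{v}+1)/(\deg(v)+1) \geq \pi(v)/(4n)$. Plugging this together with $\ell(v) = 40n\ln(n)/\pi(v)$ into the displayed bound and using $1-x \leq e^{-x}$ gives a failure probability of $\exp(-10\ln n) = n^{-10}$. Summing the two sources of failure yields a per-vertex bound of $O(n^{-2.4})$, so a union bound over the $n$ vertices suffices. The main obstacle I anticipate is the conditioning step: one has to be careful that the lower bound on $\degback{v}$ from \Cref{clm:apst-a(v)} is a statement about $\pi$ alone, while the remaining randomness in $L(v)$ is independent of both $\pi$ (once $\ell(v)$ is fixed) and the predecessors' lists, so that the two probability estimates can be combined multiplicatively without any circular dependence between the coloring choices of predecessors and the "available colors" count for $v$.
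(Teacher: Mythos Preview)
Your proposal is correct and follows essentially the same approach as the paper: split on which branch of the $\min$ is active, handle the $\ell(v)=\deg(v)+1$ case by pigeonhole, and in the other case condition on $\pi$ and the predecessors' lists, use $a(v)\geq\degback{v}+1$ together with \Cref{clm:apst-a(v)} to bound the chance that a uniform $\ell(v)$-subset of $S(v)$ avoids all available colors, then union bound. Your explicit discussion of why the conditioning is legitimate (the lower bound on $\degback{v}$ depends only on $\pi$, and $L(v)$ is independent of everything conditioned on) is a nice addition that the paper leaves more implicit.
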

	\begin{proof}	
		We condition on the choice of $\pi$ and by union bound obtain that with high probability, the complement of the event in~\Cref{clm:apst-a(v)} holds for all vertices. 
		For any vertex $v \in V$, we say a color in $S(v)$ is \textbf{available} to $v$ iff it is not assigned to any neighbor of $v$ by the time we process $v$ in the greedy algorithm. 
		Let $a(v)$ denote the number of available colors and note that $a(v) \geq  \degback{v}+1$. 
		
		In the following, we fix a vertex $v \in V$ and further condition on the randomness of $L(u)$ for all vertices $u$ with $\pi(u) > \pi(v)$. As we are using a fully deterministic greedy algorithm, the conditioning so far fix the set of available colors and the values of $a(v)$ and $\ell(v)$, but $L(v)$ is still a random $\ell(v)$-subset of the $\deg(v)+1$ colors in $S(v)$. If $\pi(v) \leq 40n\ln(n)/(\deg(v)+1)$, we have $\ell(v) = \deg(v)+1$ which means $L(v) = S(v)$ and thus there exists an available color in $L(v)$, proving the claim for this vertex. 
		
		For the remaining vertices with $\pi(v) > 40n\ln(n)/(\deg(v)+1)$, we can apply~\Cref{clm:apst-a(v)} to have, 
		\begin{align*}
			\textnormal{Pr}\left(\text{no available color of $v$ is sampled in $L(v)$}\right) &\leq (1-\frac{a(v)}{\deg(v)+1})^{\ell(v)} \leq \exp\left(-\frac{a(v) \cdot \ell(v)}{\deg(v)+1}\right) \\ 
			& \leq \exp\left(-\frac{\deg(v) \cdot \pi(v)}{2n} \cdot \frac{40n\ln{n}}{\pi(v)} \cdot \frac{1}{\deg(v)+1}\right) \\
            &\leq n^{-5}.
		\end{align*}
		Thus, with high probability, we can color $v$ from $L(v)$ in the greedy algorithm. Taking a union bound over all vertices concludes the proof. 
	\end{proof}

We obtain our APST for $(\Delta+1)$ coloring described earlier as a direct corollary of~\Cref{thm:apst-deg+1}. 

\begin{corollary}[\textbf{Asymmetric Palette Sparsification Theorem (for ($\Delta+1$) coloring)}]\label{thm:apst} ~ \\
	Let $G=(V,E)$ be any $n$-vertex graph with maximum degree $\Delta$. 
	Sample a random permutation $\pi: V \rightarrow [n]$ uniformly and define 
	\[
	\ell(v) := \min \left( \Delta+1\, , \,  \frac{40\,n\ln{n}}{\pi(v)}\right),
	\]
	for every $v \in V$ as the size of the lists of colors to be sampled for vertex $v$. Then,  
	\begin{itemize}
		\item \emph{\textbf{List sizes:}} $\sum_{v \in V} \ell(v) = O(n\log^2\!{(n)})$ with certainty; and for any fixed vertices $u \neq v \in V$, 
		\[
		\expect{\ell(v)} = O(\log^2\!{(n)}) \quad \text{and} \quad \expect{\ell(u) \cdot \ell(v)} = O(\log^{4}\!{(n)}). 
		\] 
		\item \emph{\textbf{Colorability:}} If for every vertex $v \in V$, we sample a list $L(v)$ of $\ell(v)$ colors from $[\Delta+1]$ uniformly and independently, then, with high probability (over the randomness of $\ell$ and sampled lists) the greedy 
		algorithm that iterates over vertices $v \in V$ in the decreasing order of $\pi(v)$ finds a proper list-coloring of $G$ from the lists 
		$\set{L(v) \mid v \in V}$. 
	\end{itemize}
\end{corollary}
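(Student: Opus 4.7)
The plan is to deduce \Cref{thm:apst} directly from \Cref{thm:apst-deg+1} via a short coupling argument, rather than redoing the analysis. The list-size half of the statement is immediate: the definition of $\ell(v)$ depends only on the permutation $\pi$ and on $\deg(v)$, not on the palette of available colors, so \Cref{clm:apst-list-sizes} applies verbatim and delivers the deterministic $O(n\log^2\!{(n)})$ bound on $\sum_v \ell(v)$ as well as the first and second moment bounds on $\ell$.

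For the colorability half, my plan is the following coupling. For each vertex $v$ independently, draw a uniformly random $(\deg(v)+1)$-subset $S(v)\subseteq [\Delta+1]$, chosen also independently of $\pi$. View the resulting palettes $\{S(v)\}_{v\in V}$ as an instance of the $(\deg+1)$-list coloring problem and apply \Cref{thm:apst-deg+1} to it with the same permutation $\pi$; since \Cref{thm:apst-deg+1} holds for \emph{every} choice of palettes, it holds in particular for this randomized choice, so it produces sampled lists $L'(v)\subseteq S(v)$ of size $\ell(v)$ such that with high probability the greedy algorithm in increasing order of $\ell$ properly colors $G$ using the $L'(v)$'s.

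The crux is to verify that the joint distribution of $(L'(v))_{v\in V}$ is identical to the joint distribution of the lists $(L(v))_{v\in V}$ sampled as in \Cref{thm:apst} (independent uniform $\ell(v)$-subsets of $[\Delta+1]$). Independence across vertices is immediate from the construction, so it suffices to check that the marginal law of each $L'(v)$ is uniform over $\ell(v)$-subsets of $[\Delta+1]$. A routine binomial computation---summing $\Pr[L'(v)=T]$ over the $(\deg(v)+1)$-subsets $S\supseteq T$ that are consistent with a fixed $\ell(v)$-subset $T\subseteq [\Delta+1]$---collapses via the identity $\binom{\Delta+1}{\deg(v)+1}\binom{\deg(v)+1}{\ell(v)}=\binom{\Delta+1}{\ell(v)}\binom{\Delta+1-\ell(v)}{\deg(v)+1-\ell(v)}$ to $1/\binom{\Delta+1}{\ell(v)}$, confirming uniformity.

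Once this distributional identity is in hand, we are done: the success event of the greedy algorithm depends only on the joint law of the sampled lists (together with $G$ and $\pi$), and that law agrees in the two samplings, so the high-probability guarantee from \Cref{thm:apst-deg+1} transfers to \Cref{thm:apst}. The only potential obstacle is the above distributional identity, but it is elementary and requires nothing beyond basic counting; everything else is a black-box invocation of the theorem we already proved.
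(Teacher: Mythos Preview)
Your proposal is correct and is essentially identical to the paper's own proof: both invoke \Cref{clm:apst-list-sizes} verbatim for the list-size bounds, and both establish colorability by the same coupling---first sample a uniform $(\deg(v)+1)$-subset $S(v)\subseteq[\Delta+1]$, then sample $L(v)$ uniformly from $S(v)$, and observe this makes $L(v)$ uniform in $[\Delta+1]$ while also fitting the hypothesis of \Cref{thm:apst-deg+1}. The paper simply asserts the distributional identity you verify with the binomial computation, but otherwise the arguments coincide.
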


\begin{proof}
\noindent 
	The proof of the \textbf{List Size} property is identical to the proof of~\Cref{clm:apst-list-sizes} as we only use the fact $\ell(v)\leq 40\,\ln{n}/\pi(v)$ in that proof, which continues to hold here as well. Thus, we avoid repeating the same arguments. 
	
To prove the \textbf{Colorability} property, consider the following equivalent process of sampling $L(v)$ in~\Cref{thm:apst} after having picked $\pi$: for each vertex $v \in V$, first sample $\deg(v)+1$ colors from $[\Delta+1]$ uniformly and put them in a list $S(v)$,
then pick $\ell'(v)= \min\{\deg(v)+1, \ell(v)\}$ colors from $S(v)$ uniformly and put them in a list $L'(v)$, and finally, sample $\ell(v) - \ell'(v)$ colors from $[\Delta+1] \setminus L'(v)$ and 
add them to $L'(v)$ to obtain $L(v)$. 

This way, we have that $L(v)$ is still a random $\ell(v)$-subset of $[\Delta+1]$ as desired by~\Cref{thm:apst}, but now, we also have that conditioned on the choice of $S(v)$, $L'(v)$ is a 
random $\ell'(v)$-subset of $S(v)$ as in~\Cref{thm:apst-deg+1}. Thus, we can apply~\Cref{thm:apst-deg+1} to $G$ and $\set{S(v) \mid v \in V}$ and argue that if we go over vertices of $G$ in the decreasing order of $\pi$, 
the greedy algorithm with high probability can color the graph from $L'(v) \subseteq L(v)$ also, concluding the proof.  
\end{proof}


\newcommand{\LL}{\ensuremath{\mathcal{L}}}

\section{Sublinear Algorithms from Asymmetric Palette Sparsification}\label{sec:asymmetric} 

We now show how our asymmetric palette sparsification theorem in~\Cref{sec:apst} can be used to obtain sublinear algorithms for $(\Delta+1)$ vertex coloring. These algorithms are more or less identical 
to the (exponential-time) sublinear algorithms of~\cite{AssadiCK19} from their original palette sparsification theorem and we claim no novelty in this part\footnote{We do note that however, the time-efficient algorithms of~\cite{AssadiCK19} are considerably more complex. They first need to find a so-called sparse-dense decomposition of the input graph via sublinear algorithms. Then, this decomposition is used to color the final graph from the sampled colors using an algorithmic version of the proof of their palette sparsification theorem which in particular requires a non-greedy and not-so-simple approach.}. Instead, we merely point out how the ``asymmetry'' in list-sizes in~\Cref{thm:apst} 
does not weaken the performance of the resulting sublinear algorithms beyond some $\polylog{(n)}$-factors, but instead leads to much simpler post-processing algorithms for finding the coloring of the graph from the sampled lists. 

\begin{theorem}\label{thm:sublinear}
	There exist randomized sublinear algorithms that given any graph $G=(V,E)$ with maximum degree $\Delta$ with high probability output a $(\Delta+1)$ vertex coloring of $G$ using: 
	\begin{itemize}
		\item \emph{\textbf{Graph streaming:}} a single pass over the edges of $G$ in any order and $\Ot(n)$ space; 
		\item \emph{\textbf{Sublinear time:}} $\Ot(n^{1.5})$ time and non-adaptive queries to adjacency list and matrix of $G$;
		\item \emph{\textbf{Massively parallel computation (MPC):}} $O(1)$ rounds with machines of $\Ot(n)$ memory.  
	\end{itemize}
\end{theorem}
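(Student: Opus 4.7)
My plan is to use~\Cref{thm:apst} as a black box. After sampling the random permutation $\pi$ and the lists $L(v)$ prescribed by APST, I call an edge $(u,v) \in E$ \emph{conflicting} if $L(u) \cap L(v) \neq \emptyset$. Only conflicting edges can possibly obstruct the greedy list-coloring guaranteed by APST (a non-conflicting neighbor $u$ of $v$ is irrelevant to coloring $v$ because the color assigned to $u$ is not in $L(v)$ by definition), so any algorithm that collects \emph{all} conflicting edges together with the lists themselves has enough information to output a proper $(\Delta+1)$ coloring by running the greedy algorithm in the increasing order of $\ell(\cdot)$. The central quantitative claim is therefore that the number of conflicting edges is $\Ot(n)$ with high probability. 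For a fixed edge $(u,v)$, conditioning on $\pi$ and union-bounding over the at most $\ell(u)$ colors in $L(u)$ gives $\Pr\paren{L(u)\cap L(v) \neq \emptyset \mid \pi} \leq \ell(u)\ell(v)/(\Delta+1)$; averaging over $\pi$ using the pairwise bound $\expect{\ell(u)\ell(v)} = O(\log^4\!{(n)})$ from the list-sizes part of~\Cref{thm:apst} and summing over the at most $n\Delta/2$ edges of $G$ yields $O(n\log^4\!{(n)})$ conflicting edges in expectation; an extra $O(\log n)$ slack in storage together with Markov's inequality upgrades this to an $\Ot(n)$ bound with high probability.

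\textbf{Streaming and MPC.} For the streaming algorithm I would sample $\pi$ and the lists $L(v)$ before the pass begins, which uses $\Ot(n)$ space by the deterministic bound $\sum_v \ell(v) = O(n\log^2\!{(n)})$ from~\Cref{thm:apst}. During the pass, for each arriving edge $(u,v)$ I test whether $L(u) \cap L(v) \neq \emptyset$ and, if so, store it; the conflicting-edges bound above keeps the total space in $\Ot(n)$. At the end of the stream, greedy list-coloring is run on the stored sub-graph. The MPC algorithm follows the same recipe: one designated coordinator samples $\pi$ and all the lists and broadcasts them in $O(1)$ rounds (their total size is $\Ot(n)$), each machine locally filters its portion of $E$ down to conflicting edges, and those edges are shipped in one more round to a single machine that runs greedy; since the surviving edges total $\Ot(n)$, that machine's $\Ot(n)$ memory suffices.

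\textbf{Sublinear time.} The sublinear-time algorithm is the most delicate case because all queries must be non-adaptive. I would first sample $\pi$ and all lists $L(v)$ from internal randomness, then split on the size of $\Delta$ (which can be determined with $n$ degree queries up front). If $\Delta \leq \sqrt{n}$, I simply read the entire neighborhood of every vertex using $O(n\Delta) = O(n^{1.5})$ neighbor queries, which trivially reveals all conflicting edges. If $\Delta > \sqrt{n}$, I instead enumerate the colors $c \in [\Delta+1]$ and, inside each bucket $V_c := \set{v : c \in L(v)}$, issue a pair query for every pair of vertices; by double counting, the total number of pair queries is $\sum_c \binom{|V_c|}{2} \leq \sum_{u \neq v} \abs{L(u)\cap L(v)}$, whose expectation equals $\sum_{u \neq v} \expect{\ell(u)\ell(v)}/(\Delta+1) = O(n^2\log^4\!{(n)}/\Delta) = \Ot(n^{1.5})$, again by the pairwise moment bound of~\Cref{thm:apst}. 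The entire query set is determined by $\pi$ and $L$ alone and is thus non-adaptive; its answers identify exactly the conflicting edges of $G$, after which greedy list-coloring produces the coloring. The main obstacle throughout the proof is this pair-query count bound in the $\Delta > \sqrt{n}$ regime, which is precisely what the second-moment guarantee of~\Cref{thm:apst} was designed to make routine.
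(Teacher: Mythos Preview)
Your proposal is essentially the paper's own proof: define the conflict graph, bound its expected size via the pairwise moment $\expect{\ell(u)\ell(v)}=O(\log^4 n)$ from~\Cref{thm:apst}, and then specialize to each model (your colour-bucket count $\sum_c\binom{|V_c|}{2}$ is just an explicit way of saying ``query every pair with intersecting lists,'' which is what the paper does too).

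One small gap: Markov's inequality with an $O(\log n)$ slack only bounds the failure probability by $O(1/\log n)$, not $1/\poly(n)$, so it does not give a with-high-probability resource bound as you claim. The paper handles this differently: it runs $O(\log n)$ independent copies in parallel, kills any copy that exceeds twice its expected resource budget (each survives with probability $\geq 1/2$ by Markov), and returns the answer of any survivor; this yields a $1/\poly(n)$ failure probability at only an $O(\log n)$ multiplicative cost. A second minor point: $\Delta$ is treated as part of the input in the $(\Delta+1)$-coloring problem, so you do not need the $n$ degree queries up front---which is good, because branching on their answers would make the query set adaptive.
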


As stated earlier, the proof of~\Cref{thm:sublinear} follows the same exact strategy as the (exponential-time) sublinear algorithms of~\cite{AssadiCK19}. To do so, we need the following definition. 

\paragraph{Conflict graphs.} Let $G=(V,E)$ be any graph with maximum degree $\Delta$ and $\LL := \set{L(v) \mid v \in V}$ be a set of lists of colors sampled for vertices of $G$ according to the distribution of~\Cref{thm:apst}. 
Note the choice of $\LL$ only depends on $\Delta$ and vertices of $V$, but not edges $E$. 
We define the \textbf{conflict graph} $G_\LL = (V, E_{\LL})$ of $G$ and $\LL$ as the spanning subgraph of $G$ consisting of all edges $(u,v) \in E$ such that the sampled lists $L(u)$ and $L(v)$ intersect with each other. 

The following observation allows us to use conflict graphs in our sublinear algorithms as a proxy to the original graph $G$. 

\begin{observation}\label{obs:conflict-graph-use}
	The greedy algorithm in~\Cref{thm:apst} outputs the same exact coloring when run over the conflict graph $G_{\LL}$ instead of the original graph $G$. 
\end{observation}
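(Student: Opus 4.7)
The plan is to prove the observation by induction on the order in which vertices are processed by the greedy algorithm, showing that at every step the two runs (on $G$ and on $G_\LL$) have the \emph{same} set of ``available'' colors at the current vertex, and hence (under any fixed deterministic tie-breaking rule) pick the same color. Since the processing order depends only on the values $\ell(v)$, and $\ell(v)$ is a function of $\pi$ and $\deg(v)$---which are identical in $G$ and $G_\LL$ because $G_\LL$ spans $V$ and the definition of $\ell$ uses $\deg(v)$ in $G$ in both runs---the two runs agree on the processing order. So the only question is whether they agree on color choices.

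The key identity I would establish is the following. Fix a vertex $v$ and let $c_1,\dots,c_t$ be the colors already assigned (in either run) to the neighbors of $v$ that were processed before $v$. Let $F_G(v) \subseteq L(v)$ be the colors in $L(v)$ that appear among the colors assigned to already-colored $G$-neighbors of $v$, and define $F_{G_\LL}(v) \subseteq L(v)$ analogously for $G_\LL$-neighbors. Trivially $F_{G_\LL}(v) \subseteq F_G(v)$ because $G_\LL$ is a spanning subgraph of $G$. For the reverse inclusion, take any color $c \in F_G(v)$: then there is an already-colored neighbor $u$ of $v$ in $G$ whose assigned color is $c$. Since $u$ was colored from its own sampled list, $c \in L(u)$; combined with $c \in L(v)$ this gives $c \in L(u) \cap L(v) \neq \emptyset$, so by definition $(u,v) \in E_\LL$ and $u$ is a neighbor of $v$ in $G_\LL$ as well. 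Hence $c \in F_{G_\LL}(v)$, and therefore $F_G(v) = F_{G_\LL}(v)$.

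Given this identity, an induction on the processing order finishes the proof: assume the two runs have assigned the same colors to all vertices processed before $v$; then the sets of available colors $L(v) \setminus F_G(v)$ and $L(v) \setminus F_{G_\LL}(v)$ coincide, so the greedy rule (which picks, say, the minimum-index available color) selects the same color for $v$ in both runs. The base case is immediate, since the first vertex in the ordering has no already-colored neighbors in either graph.

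There is essentially no obstacle here---the statement is almost tautological once one notices that any color actually \emph{used} by a neighbor automatically witnesses a non-empty list intersection, and hence that neighbor survives into $G_\LL$. The only mild subtlety is to be explicit that ``the greedy algorithm'' uses a fixed tie-breaking rule so that ``the same coloring'' is well-defined; with any such rule the induction above goes through verbatim.
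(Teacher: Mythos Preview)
Your proof is correct and follows essentially the same idea as the paper's own argument. The paper's proof is a single sentence---``the only edges that affect the greedy algorithm of~\Cref{thm:apst} are edges $(u,v)\in E$ such that $L(u)\cap L(v)$ is non-empty, and these edges are identical in $G$ and $G_\LL$''---and your induction simply unpacks this observation formally, making explicit the key point that any color actually used by a neighbor witnesses a non-empty list intersection.
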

\begin{proof}
	The only edges that affect the greedy algorithm of~\Cref{thm:apst} are edges $(u,v) \in E$ such that $L(u) \cap L(v)$ is non-empty. These edges are identical in $G$ and $G_{\LL}$. 
\end{proof}

The following easy claim also allows us to bound the size of the conflict graphs. 

\begin{claim}\label{clm:conflict-graph-size}
	The list of colors $\LL$ consists of $O(n\log^2{n})$ colors with certainty and the expected number of edges in $G_\LL=(V,E_{\LL})$ is $\Exp\card{E_{\LL}} = O(n\log^4{n})$. 
\end{claim}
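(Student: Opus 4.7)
The plan is to handle the two parts of the claim separately. The deterministic $O(n\log^2 n)$ bound on $|\mathcal{L}|$ is immediate: $|\mathcal{L}| = \sum_{v \in V} \ell(v)$, and the first bullet of \Cref{clm:apst-list-sizes} (equivalently, the List sizes property of \Cref{thm:apst}) already gives the deterministic bound $\sum_{v \in V} \ell(v) = O(n\log^2 n)$ directly from the definition of $\ell(v)$ and the harmonic sum, so nothing more is needed there.

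For the edge bound, I would first analyze a single edge $(u,v) \in E$ and then sum. Condition on the list sizes $\ell(u)$ and $\ell(v)$; then $L(u)$ and $L(v)$ are independent uniformly random subsets of $[\Delta+1]$ of the specified sizes. A union bound over the $\ell(u)\cdot\ell(v)$ pairs of potentially matching colors, each matching with probability $1/(\Delta+1)$, gives
\[
\Pr\bigl(L(u)\cap L(v)\neq\emptyset \,\big|\, \ell(u),\ell(v)\bigr) \;\leq\; \frac{\ell(u)\cdot\ell(v)}{\Delta+1}.
\]
Taking expectation over $\ell$ and invoking the second bullet of \Cref{clm:apst-list-sizes} (which yields $\mathbb{E}[\ell(u)\ell(v)] = O(\log^4 n)$ for any fixed pair $u\neq v$, since $\ell$ depends only on the permutation $\pi$ and not on the graph or edges), we get
\[
\Pr\bigl((u,v)\in E_{\mathcal{L}}\bigr) \;\leq\; \frac{\mathbb{E}[\ell(u)\cdot\ell(v)]}{\Delta+1} \;=\; O\!\left(\frac{\log^4 n}{\Delta}\right).
\]

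Finally, by linearity of expectation and the trivial bound $|E| \leq n\Delta/2$,
\[
\mathbb{E}|E_{\mathcal{L}}| \;=\; \sum_{(u,v)\in E} \Pr\bigl((u,v)\in E_{\mathcal{L}}\bigr) \;\leq\; \frac{n\Delta}{2}\cdot O\!\left(\frac{\log^4 n}{\Delta}\right) \;=\; O(n\log^4 n),
\]
which is the desired bound. I do not expect any genuine obstacle: the argument is a short two-step calculation. The only subtle point worth flagging in the write-up is that the list-size distribution depends only on the random permutation $\pi$ (and hence on $V$, not $E$), which is exactly what makes the application of \Cref{clm:apst-list-sizes} valid for any fixed edge endpoints $u,v$, and that conditional on $\ell(u),\ell(v)$ the sampled lists $L(u), L(v)$ are independent uniform subsets, which justifies the per-edge union bound above.
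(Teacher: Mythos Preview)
Your proposal is correct and follows essentially the same approach as the paper: invoke the deterministic list-size bound from \Cref{thm:apst} for the first part, and for the second part condition on $\ell(u),\ell(v)$, apply a union bound to get $\Pr(L(u)\cap L(v)\neq\emptyset\mid\ell(u),\ell(v))\le \ell(u)\ell(v)/(\Delta+1)$, take expectations using $\expect{\ell(u)\ell(v)}=O(\log^4 n)$, and sum over the at most $n\Delta/2$ edges. The paper's proof is line-for-line the same argument.
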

\begin{proof}
	By the list sizes property of \Cref{thm:apst}, we already know that  $\LL$ consists of $O(n\log^2{n})$ colors with certainty. For the second part, for any edge $(u,v)$, we have, 
	\[
		\Pr\paren{\text{$(u,v)$ is in $E_{\LL}$}} \leq \Exp_{\ell(u),\ell(v)}\bracket{\Pr\paren{L(u) \cap L(v) \neq \emptyset \mid \ell(u),\ell(v)}} \leq \expect{\frac{\ell(u) \cdot \ell(v)}{\Delta+1}} = \frac{O(\log^4{(n)})}{\Delta},
	\]
	where the first inequality is by the law of conditional expectation (by conditioning on list-sizes first), the second is by union bound, and the third is by the list-size properties of~\Cref{thm:apst}. Since the total
	number of edges in $G$ is at most $n\Delta/2$, we can conclude the proof. 	
\end{proof}

We now prove~\Cref{thm:sublinear} for each family of sublinear algorithms separately. In the following, we prove the resource guarantees of the algorithms only in expectation instead of a worst-case bound that happens with certainty. However, using the standard trick of running $O(\log{n})$ copies of the algorithm in parallel, terminating any copy that uses more than twice the expected resources, and returning the answer of any of the remaining ones, we obtain the desired
algorithms in~\Cref{thm:sublinear} as well (this reduction only increases space/query/memory of algorithms with an $O(\log{n})$ multiplicative factor and the error probability with a $1/\poly(n)$ additive factor). 

\paragraph{Graph streaming.} At the beginning of the stream, sample the colors $\LL$ using~\Cref{thm:apst} and during the stream, only store the edges that belong to $G_{\LL}$. This is possible as we only need a random permutation and maximum degree $\Delta$ in order to provide list sizes of~\Cref{thm:apst} and sample the colors $\LL$. In the end, 
run the greedy algorithm on $G_{\LL}$ and return the coloring.~\Cref{thm:apst} ensures that with high probability $G$ is (list-)colorable from the sampled lists which leads to a $(\Delta+1)$ coloring of the entire graph. 
\Cref{obs:conflict-graph-use} ensures that we only need to work with $G_{\LL}$ at the end of the stream and not all of $G$, and~\Cref{clm:conflict-graph-size} bounds the space of the algorithm with $\Ot(n)$ space in expectation. 

We can also implement this algorithm in \emph{dynamic streams} by recovering the conflict graph using a sparse recovery sketch instead of explicitly storing each of its edges in the stream. See~\cite{AhnGM12} for more on dynamic streams. 

Finally, the knowledge of $\Delta$ is not necessary for this algorithm and can be removed using the same argument as in~\cite{AssadiCK19}. 

\paragraph{Sublinear time.} By the same argument as in our graph streaming algorithm, we can sample the colors $\LL$ using~\Cref{thm:apst} and query \emph{all} pairs of vertices $u \neq v \in V$ where $L(u) \cap L(v)$ is non-empty to find the edges of $G_{\LL}$. The same analysis as in~\Cref{clm:conflict-graph-size}
applied to the $\binom{n}{2}$ vertex-pairs (instead of $\leq n\Delta/2$ edges), ensures that the expected number of queries is $\Ot(n^2/\Delta)$. We can then color $G_{\LL}$ using the greedy algorithm in $\Ot(n)$ time. The correctness follows 
from~\Cref{thm:apst} and \Cref{obs:conflict-graph-use} as before. This algorithm only requires adjacency matrix access to $G$ and we run it when $\Delta \geq \sqrt{n}$ to obtain an $\Ot(n\sqrt{n})$ time/query algorithm. When $\Delta \leq \sqrt{n}$, we instead run the standard greedy algorithm using $O(n\Delta) = \Ot(n\sqrt{n})$ 
time using the adjacency list access to $G$. 

\paragraph{MPC.} The algorithm is almost identical to the semi-streaming one. Suppose we have access to public randomness. Then, we can sample the lists $\LL$ publicly, and each machine that has an 
edge in $G_{\LL}$ can send it to a designated machine. This way, a single machine receives all of $G_{\LL}$ and can color it greedily. The correctness follows from~\Cref{thm:apst} and~\Cref{obs:conflict-graph-use} and the memory needed for this designated machine will be $\Ot(n)$ in expectation by~\Cref{clm:conflict-graph-size}. 
Finally, we can remove the public randomness by having one machine do the sampling first on its own and share it with all the remaining machines in $O(1)$ rounds using the standard MPC primitives of search and sort.

\section{Concluding Remarks}\label{sec:conc}

In this paper, we simplified the sublinear algorithms of~\cite{AssadiCK19} for $(\Delta+1)$ vertex coloring of graphs with maximum degree $\Delta$. This was achieved by \emph{weakening} 
the palette sparsification theorem (PST) of~\cite{AssadiCK19} to our \emph{asymmetric} palette sparsification theorem (APST), which we show admits a much simpler proof and coloring algorithm, without
limiting its applicability to sublinear algorithms. At this point, there are several natural questions for future work, which we elaborate on below.

Since its introduction in~\cite{AssadiCK19}, PST has found various applications in designing sublinear algorithms for other graph coloring problems as well. It will be interesting to examine to what extent 
APST is tailored to $(\Delta+1)$ coloring and whether, similar to PST, APST can also be extended to other graph coloring problems. For instance,~\cite{AlonA20} developed sublinear algorithms for
$O(\Delta/\log(\Delta))$ coloring of triangle-free graphs using a different PST targeted to this problem. Can we also design an APST for this problem? 

Another set of questions are regarding ``white-box'' applications of PST. For instance,~\cite{AssadiKM22} showed how to use ideas 
from PST, plus various other tools, to design a graph streaming algorithm for $\Delta$-coloring. Can we use APST to obtain a similar result but without the complicated algorithms and elaborate case analyzes in~\cite{AssadiKM22}? 
Taking this even further, can one obtain sublinear algorithms for $(\Delta-1)$ coloring~\cite{Reed99} or even less colors as in~\cite{MolloyR14} wherein the barrier to colorability are ``small'' subgraphs%
\footnote{These results are known in the graph theory literature as strengthening of Brook's theorem for $\Delta$-coloring which identifies having a $(\Delta+1)$-clique as the only barrier to $\Delta$ colorability for $\Delta \geq 3$. 
A highly general form of these results is that of~\cite{MolloyR14} that shows the only barriers to $\Delta-k+1$ coloring, for large enough $\Delta$, are $O(\Delta)$-size subgraphs that are not $(\Delta-k+1)$ colorable, as long
as $k$ satisfies $(k+1) \cdot (k+2) \leq \Delta$, and that this threshold is tight~\cite{EmdenHK98}.}. Another example of these white-box applications is in designing Local Computation Algorithms (LCA) for $(\Delta+1)$ coloring~\cite{ChangFGUZ19}; again, 
can APST lead to simpler and more efficient LCAs for $(\Delta+1)$ coloring, e.g., one with only $O(\Delta \cdot \log{(n)})$ query time (existing algorithms in~\cite{ChangFGUZ19} have some unspecified $\poly(\Delta)$ dependence)?
Yet another example, is designing dynamic graph algorithms for $(\Delta+1)$ coloring, against adaptive adversaries~\cite{BehnezhadRW25,FlinH25} that use sparse-dense decompositions and somewhat similar coloring steps as in the PST. Again, can ideas from APST simplify and further improve these bounds? It is worth mentioning that already some tools developed in this paper are used in~\cite{BehnezhadRW25} for addressing this problem.

Finally, beside its applicability to sublinear algorithms, PST has also been studied extensively from a purely combinatorial point of view~\cite{AlonA20,AndersonBD22,KahnK23,KahnK24,HefetzK24,AshvinkumarK24}. It thus is natural to consider APST through the same lens as well; for instance, in the same vein as in~\cite{HefetzK24}, we can ask the following question: suppose we sample lists of \emph{average} size only \emph{two} (or some other larger \emph{constant}) 
colors on vertices chosen from some range $\set{1,\ldots,q}$; then, what is a (asymptotically) minimal choice for $q$ so that the greedy algorithm that colors the vertices in the increasing order of their list sizes 
 finds a proper coloring of any given graph from its sampled lists with high probability? For instance, does $q=\Theta(\Delta^2)$ work in this case (perhaps for $\Delta$ sufficiently large or even $\omega(\log{n})$)? An alternative version of 
 this question, which to our knowledge is not studied even for PST, is to fix a choice of $q$ to be, say, $\Delta+1$, and instead determines what \emph{fraction} of vertices can be properly colored this way?

\section*{Acknowledgement} 

We would like to thank Soheil Behnezhad, Yannic Maus, and Ronitt Rubinfeld for helpful discussions and their encouragement toward finding a simpler sublinear algorithms for $(\Delta+1)$ vertex coloring. 
We are also thankful to Chris Trevisan for pointing out a shorter proof of~\Cref{lem:warmup-RHS-suffices} included in this version of the paper. Sepehr Assadi is additionally grateful to Yu Chen and Sanjeev Khanna for their prior collaboration in~\cite{AssadiCK19} that formed the backbone of
this work. Finally, we are grateful to the anonymous reviewers of TheoretiCS for many invaluable comments and suggestions that helped tremendously with the presentation of the paper.  

\newpage

\printbibliography

@article{AshvinkumarK24,
  title={Palette Sparsification via {FKNP}},
  author={Ashvinkumar, Vikrant and Kenney, Charles},
  journal={arXiv preprint arXiv:2408.12835},
  year={2024}, 
  url={https://arxiv.org/abs/2408.12835}
}

@article{Reed99,
  author       = {Bruce A. Reed},
  title        = {A Strengthening of {B}rooks' {T}heorem},
  journal      = {J. Comb. Theory {B}},
  volume       = {76},
  number       = {2},
  pages        = {136--149},
  year         = {1999},
  url          = {https://doi.org/10.1006/jctb.1998.1891},
  doi          = {10.1006/JCTB.1998.1891},
  timestamp    = {Fri, 07 Jun 2024 15:05:34 +0200},
  biburl       = {https://dblp.org/rec/journals/jct/Reed99.bib},
  bibsource    = {dblp computer science bibliography, https://dblp.org}
}

@inproceedings{FlinH25,
  author       = {Maxime Flin and
                  Magn{\'{u}}s M. Halld{\'{o}}rsson},
  editor       = {Keren Censor{-}Hillel and
                  Fabrizio Grandoni and
                  Jo{\"{e}}l Ouaknine and
                  Gabriele Puppis},
  title        = {Faster Dynamic ({\(\Delta\)}+1)-Coloring Against Adaptive Adversaries},
  booktitle    = {52nd International Colloquium on Automata, Languages, and Programming,
                  {ICALP} 2025, July 8-11, 2025, Aarhus, Denmark},
  series       = {LIPIcs},
  volume       = {334},
  pages        = {79:1--79:21},
  publisher    = {Schloss Dagstuhl - Leibniz-Zentrum f{\"{u}}r Informatik},
  year         = {2025},
  url          = {https://doi.org/10.4230/LIPIcs.ICALP.2025.79},
  doi          = {10.4230/LIPICS.ICALP.2025.79},
  timestamp    = {Mon, 30 Jun 2025 16:27:19 +0200},
  biburl       = {https://dblp.org/rec/conf/icalp/FlinH25.bib},
  bibsource    = {dblp computer science bibliography, https://dblp.org}
}

@inproceedings{BehnezhadRW25,
  author       = {Soheil Behnezhad and
                  Rajmohan Rajaraman and
                  Omer Wasim},
  editor       = {Yossi Azar and
                  Debmalya Panigrahi},
  title        = {Fully Dynamic ({\(\Delta\)} + 1)-Coloring Against Adaptive Adversaries},
  booktitle    = {Proceedings of the 2025 Annual {ACM-SIAM} Symposium on Discrete Algorithms,
                  {SODA} 2025, New Orleans, LA, USA, January 12-15, 2025},
  pages        = {4983--5026},
  publisher    = {{SIAM}},
  year         = {2025},
  url          = {https://doi.org/10.1137/1.9781611978322.169},
  doi          = {10.1137/1.9781611978322.169},
  timestamp    = {Tue, 28 Jan 2025 14:38:41 +0100},
  biburl       = {https://dblp.org/rec/conf/soda/BehnezhadRW25.bib},
  bibsource    = {dblp computer science bibliography, https://dblp.org}
}

@article{EmdenHK98,
  author       = {Thomas Emden{-}Weinert and
                  Stefan Hougardy and
                  Bernd Kreuter},
  title        = {Uniquely Colourable Graphs and the Hardness of Colouring Graphs of
                  Large Girth},
  journal      = {Comb. Probab. Comput.},
  volume       = {7},
  number       = {4},
  pages        = {375--386},
  year         = {1998},
  url          = {http://journals.cambridge.org/action/displayAbstract?aid=46667},
  timestamp    = {Thu, 12 Mar 2020 15:55:30 +0100},
  biburl       = {https://dblp.org/rec/journals/cpc/Emden-WeinertHK98.bib},
  bibsource    = {dblp computer science bibliography, https://dblp.org}
}

@article{MolloyR14,
  author       = {Michael Molloy and
                  Bruce A. Reed},
  title        = {Colouring graphs when the number of colours is almost the maximum
                  degree},
  journal      = {J. Comb. Theory {B}},
  volume       = {109},
  pages        = {134--195},
  year         = {2014},
  url          = {https://doi.org/10.1016/j.jctb.2014.06.004},
  doi          = {10.1016/J.JCTB.2014.06.004},
  timestamp    = {Fri, 07 Jun 2024 15:05:24 +0200},
  biburl       = {https://dblp.org/rec/journals/jct/MolloyR14.bib},
  bibsource    = {dblp computer science bibliography, https://dblp.org}
}

@misc{Lec24,
  author       = {Sepehr Assadi},
  title        = {Advanced Algorithms course (CS466) at {U}niversity of {W}aterloo, {F}all 2024. {L}ecture 1.},
  url          = {https://sepehr.assadi.info/courses/cs466(6)-f24/Lectures/lec1.pdf},
  year         = {2024},
  note         = {Accessed on February 21, 2025},
}

@misc{HW,
  author = {Ronitt Rubinfeld},
  title        = {Sublinear Time Algorithms course (6.5240) at {MIT}. {H}omework 2},
  url          = {https://people.csail.mit.edu/ronitt/COURSE/F24/HW/hw2.pdf},
  year         = {2024},
  note         = {Accessed on February 21, 2025},
}

@article{MorrisS21,
  author       = {Jackson Morris and
                  Fang Song},
  title        = {Simple vertex coloring algorithms},
  journal      = {CoRR},
  volume       = {abs/2102.07089},
  year         = {2021},
  url          = {https://arxiv.org/abs/2102.07089},
  eprinttype    = {arXiv},
  eprint       = {2102.07089},
  timestamp    = {Mon, 22 Feb 2021 16:46:20 +0100},
  biburl       = {https://dblp.org/rec/journals/corr/abs-2102-07089.bib},
  bibsource    = {dblp computer science bibliography, https://dblp.org}
}

@article{ferber2025improved,
  author       = {Asaf Ferber and
                  Liam Hardiman and
                  Xiaonan Chen},
  title        = {Improved Sublinear Algorithms for Classical and Quantum Graph Coloring},
  journal      = {CoRR},
  volume       = {abs/2502.06024},
  year         = {2025},
  url          = {https://doi.org/10.48550/arXiv.2502.06024},
  doi          = {10.48550/ARXIV.2502.06024},
  eprinttype    = {arXiv},
  eprint       = {2502.06024},
  timestamp    = {Wed, 12 Mar 2025 22:48:04 +0100},
  biburl       = {https://dblp.org/rec/journals/corr/abs-2502-06024.bib},
  bibsource    = {dblp computer science bibliography, https://dblp.org}
}

@inproceedings{AhnGM12,
  author       = {Kook Jin Ahn and
                  Sudipto Guha and
                  Andrew McGregor},
  editor       = {Yuval Rabani},
  title        = {Analyzing graph structure via linear measurements},
  booktitle    = {Proceedings of the Twenty-Third Annual {ACM-SIAM} Symposium on Discrete
                  Algorithms, {SODA} 2012, Kyoto, Japan, January 17-19, 2012},
  pages        = {459--467},
  publisher    = {{SIAM}},
  year         = {2012},
  url          = {https://doi.org/10.1137/1.9781611973099.40},
  doi          = {10.1137/1.9781611973099.40},
  timestamp    = {Tue, 14 Oct 2025 19:39:21 +0200},
  biburl       = {https://dblp.org/rec/conf/soda/AhnGM12.bib},
  bibsource    = {dblp computer science bibliography, https://dblp.org}
}

@book{JansonLR11,
  title={Random graphs},
  author={Janson, Svante and Luczak, Tomasz and Rucinski, Andrzej},
  year={2011},
  publisher={John Wiley \& Sons}
}

@inproceedings{AssadiCK19,
  author       = {Sepehr Assadi and
                  Yu Chen and
                  Sanjeev Khanna},
  editor       = {Timothy M. Chan},
  title        = {Sublinear Algorithms for ({\(\Delta\)} + 1) Vertex Coloring},
  booktitle    = {Proceedings of the Thirtieth Annual {ACM-SIAM} Symposium on Discrete
                  Algorithms, {SODA} 2019, San Diego, California, USA, January 6-9,
                  2019},
  pages        = {767--786},
  publisher    = {{SIAM}},
  year         = {2019},
  url          = {https://doi.org/10.1137/1.9781611975482.48},
  doi          = {10.1137/1.9781611975482.48},
  timestamp    = {Sun, 12 Nov 2023 02:12:22 +0100},
  biburl       = {https://dblp.org/rec/conf/soda/AssadiCK19.bib},
  bibsource    = {dblp computer science bibliography, https://dblp.org}
}

@article{FeigenbaumKMSZ04,
  author       = {Joan Feigenbaum and
                  Sampath Kannan and
                  Andrew McGregor and
                  Siddharth Suri and
                  Jian Zhang},
  title        = {On graph problems in a semi-streaming model},
  journal      = {Theor. Comput. Sci.},
  volume       = {348},
  number       = {2-3},
  pages        = {207--216},
  year         = {2005},
  url          = {https://doi.org/10.1016/j.tcs.2005.09.013},
  doi          = {10.1016/J.TCS.2005.09.013},
  timestamp    = {Wed, 17 Feb 2021 21:58:52 +0100},
  biburl       = {https://dblp.org/rec/journals/tcs/FeigenbaumKMSZ05.bib},
  bibsource    = {dblp computer science bibliography, https://dblp.org}
}

@article{ChazelleRT01,
  author       = {Bernard Chazelle and
                  Ronitt Rubinfeld and
                  Luca Trevisan},
  title        = {Approximating the Minimum Spanning Tree Weight in Sublinear Time},
  journal      = {{SIAM} J. Comput.},
  volume       = {34},
  number       = {6},
  pages        = {1370--1379},
  year         = {2005},
  url          = {https://doi.org/10.1137/S0097539702403244},
  doi          = {10.1137/S0097539702403244},
  timestamp    = {Tue, 25 Feb 2025 11:15:14 +0100},
  biburl       = {https://dblp.org/rec/journals/siamcomp/ChazelleRT05.bib},
  bibsource    = {dblp computer science bibliography, https://dblp.org}
}

@inproceedings{KarloffSV10,
  author       = {Howard J. Karloff and
                  Siddharth Suri and
                  Sergei Vassilvitskii},
  editor       = {Moses Charikar},
  title        = {A Model of Computation for MapReduce},
  booktitle    = {Proceedings of the Twenty-First Annual {ACM-SIAM} Symposium on Discrete
                  Algorithms, {SODA} 2010, Austin, Texas, USA, January 17-19, 2010},
  pages        = {938--948},
  publisher    = {{SIAM}},
  year         = {2010},
  url          = {https://doi.org/10.1137/1.9781611973075.76},
  doi          = {10.1137/1.9781611973075.76},
  timestamp    = {Tue, 02 Feb 2021 17:07:39 +0100},
  biburl       = {https://dblp.org/rec/conf/soda/KarloffSV10.bib},
  bibsource    = {dblp computer science bibliography, https://dblp.org}
}

@article{Reed98,
  title={{$\omega$}, {$\Delta$}, and {$\chi$}},
  author={Reed, Bruce},
  journal={Journal of Graph Theory},
  volume       = {27},
  number       = {4},
  pages        = {177--212},
  year         = {1998},
  url          = {https://doi.org/10.1002/(SICI)1097-0118(199804)27:4\<177::AID-JGT1\>3.0.CO;2-K},
  doi          = {10.1002/(SICI)1097-0118(199804)27:4\<177::AID-JGT1\>3.0.CO;2-K},
  timestamp    = {Fri, 02 Oct 2020 14:38:21 +0200},
  biburl       = {https://dblp.org/rec/journals/jgt/Reed98.bib},
  bibsource    = {dblp computer science bibliography, https://dblp.org}
}

@article{HarrisSS16,
  author       = {David G. Harris and
                  Johannes Schneider and
                  Hsin{-}Hao Su},
  title        = {Distributed ({\(\Delta\)} +1)-Coloring in Sublogarithmic Rounds},
  journal      = {J. {ACM}},
  volume       = {65},
  number       = {4},
  pages        = {19:1--19:21},
  year         = {2018},
  url          = {https://doi.org/10.1145/3178120},
  doi          = {10.1145/3178120},
  timestamp    = {Sun, 19 Jan 2025 14:36:58 +0100},
  biburl       = {https://dblp.org/rec/journals/jacm/HarrisSS18.bib},
  bibsource    = {dblp computer science bibliography, https://dblp.org}
}

@article{KahnK23,
  title={Asymptotics for Palette Sparsification},
  author={Kahn, Jeff and Kenney, Charles},
  journal={arXiv preprint arXiv:2306.00171},
  year={2023}, 
  url={https://arxiv.org/abs/2306.00171}
}

@inproceedings{BeraCG20,
  author       = {Suman K. Bera and
                  Amit Chakrabarti and
                  Prantar Ghosh},
  editor       = {Artur Czumaj and
                  Anuj Dawar and
                  Emanuela Merelli},
  title        = {Graph Coloring via Degeneracy in Streaming and Other Space-Conscious
                  Models},
  booktitle    = {47th International Colloquium on Automata, Languages, and Programming,
                  {ICALP} 2020, July 8-11, 2020, Saarbr{\"{u}}cken, Germany (Virtual
                  Conference)},
  series       = {LIPIcs},
  volume       = {168},
  pages        = {11:1--11:21},
  publisher    = {Schloss Dagstuhl - Leibniz-Zentrum f{\"{u}}r Informatik},
  year         = {2020},
  url          = {https://doi.org/10.4230/LIPIcs.ICALP.2020.11},
  doi          = {10.4230/LIPICS.ICALP.2020.11},
  timestamp    = {Wed, 21 Aug 2024 22:46:00 +0200},
  biburl       = {https://dblp.org/rec/conf/icalp/BeraCG20.bib},
  bibsource    = {dblp computer science bibliography, https://dblp.org}
}

@inproceedings{HalldorssonKNT22,
  author       = {Magn{\'{u}}s M. Halld{\'{o}}rsson and
                  Fabian Kuhn and
                  Alexandre Nolin and
                  Tigran Tonoyan},
  editor       = {Stefano Leonardi and
                  Anupam Gupta},
  title        = {Near-optimal distributed degree+1 coloring},
  booktitle    = {{STOC} '22: 54th Annual {ACM} {SIGACT} Symposium on Theory of Computing,
                  Rome, Italy, June 20 - 24, 2022},
  pages        = {450--463},
  publisher    = {{ACM}},
  year         = {2022},
  url          = {https://doi.org/10.1145/3519935.3520023},
  doi          = {10.1145/3519935.3520023},
  timestamp    = {Sun, 19 Jan 2025 13:28:31 +0100},
  biburl       = {https://dblp.org/rec/conf/stoc/HalldorssonKNT22.bib},
  bibsource    = {dblp computer science bibliography, https://dblp.org}
}

@inproceedings{FlinGHKN24,
  author       = {Maxime Flin and
                  Mohsen Ghaffari and
                  Magn{\'{u}}s M. Halld{\'{o}}rsson and
                  Fabian Kuhn and
                  Alexandre Nolin},
  editor       = {David P. Woodruff},
  title        = {A Distributed Palette Sparsification Theorem},
  booktitle    = {Proceedings of the 2024 {ACM-SIAM} Symposium on Discrete Algorithms,
                  {SODA} 2024, Alexandria, VA, USA, January 7-10, 2024},
  pages        = {4083--4123},
  publisher    = {{SIAM}},
  year         = {2024},
  url          = {https://doi.org/10.1137/1.9781611977912.142},
  doi          = {10.1137/1.9781611977912.142},
  timestamp    = {Tue, 07 May 2024 20:08:16 +0200},
  biburl       = {https://dblp.org/rec/conf/soda/FlinGHKN24.bib},
  bibsource    = {dblp computer science bibliography, https://dblp.org}
}

@inproceedings{AssadiCGS23,
  author       = {Sepehr Assadi and
                  Amit Chakrabarti and
                  Prantar Ghosh and
                  Manuel Stoeckl},
  editor       = {Floris Geerts and
                  Hung Q. Ngo and
                  Stavros Sintos},
  title        = {Coloring in Graph Streams via Deterministic and Adversarially Robust
                  Algorithms},
  booktitle    = {Proceedings of the 42nd {ACM} {SIGMOD-SIGACT-SIGAI} Symposium on Principles
                  of Database Systems, {PODS} 2023, Seattle, WA, USA, June 18-23, 2023},
  pages        = {141--153},
  publisher    = {{ACM}},
  year         = {2023},
  url          = {https://doi.org/10.1145/3584372.3588681},
  doi          = {10.1145/3584372.3588681},
  timestamp    = {Sun, 04 Aug 2024 19:43:54 +0200},
  biburl       = {https://dblp.org/rec/conf/pods/AssadiCGS23.bib},
  bibsource    = {dblp computer science bibliography, https://dblp.org}
}

@inproceedings{ChakrabartiGS22,
  author       = {Amit Chakrabarti and
                  Prantar Ghosh and
                  Manuel Stoeckl},
  editor       = {Mark Braverman},
  title        = {Adversarially Robust Coloring for Graph Streams},
  booktitle    = {13th Innovations in Theoretical Computer Science Conference, {ITCS}
                  2022, January 31 - February 3, 2022, Berkeley, CA, {USA}},
  series       = {LIPIcs},
  volume       = {215},
  pages        = {37:1--37:23},
  publisher    = {Schloss Dagstuhl - Leibniz-Zentrum f{\"{u}}r Informatik},
  year         = {2022},
  url          = {https://doi.org/10.4230/LIPIcs.ITCS.2022.37},
  doi          = {10.4230/LIPICS.ITCS.2022.37},
  timestamp    = {Mon, 03 Mar 2025 21:13:48 +0100},
  biburl       = {https://dblp.org/rec/conf/innovations/ChakrabartiGS22.bib},
  bibsource    = {dblp computer science bibliography, https://dblp.org}
}

@article{AssadiKM22,
  author       = {Sepehr Assadi and
                  Pankaj Kumar and
                  Parth Mittal},
  title        = {Brooks' Theorem in Graph Streams: {A} Single-Pass Semi-Streaming Algorithm
                  for {\(\Delta\)}-Coloring},
  journal      = {TheoretiCS},
  volume       = {2},
  year         = {2023},
  url          = {https://doi.org/10.46298/theoretics.23.9},
  doi          = {10.46298/THEORETICS.23.9},
  timestamp    = {Wed, 06 Dec 2023 13:15:26 +0100},
  biburl       = {https://dblp.org/rec/journals/theoretics/AssadiKM23.bib},
  bibsource    = {dblp computer science bibliography, https://dblp.org}
}

@inproceedings{AlonA20,
  author       = {Noga Alon and
                  Sepehr Assadi},
  editor       = {Jaroslaw Byrka and
                  Raghu Meka},
  title        = {Palette Sparsification Beyond ({\(\Delta\)}+1) Vertex Coloring},
  booktitle    = {Approximation, Randomization, and Combinatorial Optimization. Algorithms
                  and Techniques, {APPROX/RANDOM} 2020, August 17-19, 2020, Virtual
                  Conference},
  series       = {LIPIcs},
  volume       = {176},
  pages        = {6:1--6:22},
  publisher    = {Schloss Dagstuhl - Leibniz-Zentrum f{\"{u}}r Informatik},
  year         = {2020},
  url          = {https://doi.org/10.4230/LIPIcs.APPROX/RANDOM.2020.6},
  doi          = {10.4230/LIPICS.APPROX/RANDOM.2020.6},
  timestamp    = {Wed, 21 Aug 2024 22:46:00 +0200},
  biburl       = {https://dblp.org/rec/conf/approx/AlonA20.bib},
  bibsource    = {dblp computer science bibliography, https://dblp.org}
}

@article{KahnK24,
  title={Asymptotics for Palette Sparsification from Variable Lists},
  author={Kahn, Jeff and Kenney, Charles},
  journal={arXiv preprint arXiv:2407.07928},
  year={2024}, 
  url={https://arxiv.org/abs/2407.07928}
}

@article{AndersonBD22,
  author       = {James Anderson and
                  Anton Bernshteyn and
                  Abhishek Dhawan},
  title        = {Coloring Graphs With Forbidden Almost Bipartite Subgraphs},
  journal      = {Random Struct. Algorithms},
  volume       = {66},
  number       = {4},
  year         = {2025},
  url          = {https://doi.org/10.1002/rsa.70012},
  doi          = {10.1002/RSA.70012},
  timestamp    = {Thu, 11 Sep 2025 20:24:09 +0200},
  biburl       = {https://dblp.org/rec/journals/rsa/AndersonBD25.bib},
  bibsource    = {dblp computer science bibliography, https://dblp.org}
}

@article{HefetzK24,
  title={Colouring graphs from random lists},
  author={Hefetz, Dan and Krivelevich, Michael},
  journal={arXiv preprint arXiv:2402.09998},
  year={2024}, 
  url={https://arxiv.org/abs/2402.09998}
}

@inproceedings{ChangFGUZ19,
  author       = {Yi{-}Jun Chang and
                  Manuela Fischer and
                  Mohsen Ghaffari and
                  Jara Uitto and
                  Yufan Zheng},
  editor       = {Peter Robinson and
                  Faith Ellen},
  title        = {The Complexity of ({\(\Delta\)}+1) Coloring in Congested Clique, Massively
                  Parallel Computation, and Centralized Local Computation},
  booktitle    = {Proceedings of the 2019 {ACM} Symposium on Principles of Distributed
                  Computing, {PODC} 2019, Toronto, ON, Canada, July 29 - August 2, 2019},
  pages        = {471--480},
  publisher    = {{ACM}},
  year         = {2019},
  url          = {https://doi.org/10.1145/3293611.3331607},
  doi          = {10.1145/3293611.3331607},
  timestamp    = {Sun, 02 Oct 2022 16:13:45 +0200},
  biburl       = {https://dblp.org/rec/conf/podc/ChangFGUZ19.bib},
  bibsource    = {dblp computer science bibliography, https://dblp.org}
}

@inproceedings{FischerHM23,
  author       = {Manuela Fischer and
                  Magn{\'{u}}s M. Halld{\'{o}}rsson and
                  Yannic Maus},
  editor       = {Nikhil Bansal and
                  Viswanath Nagarajan},
  title        = {Fast Distributed Brooks' Theorem},
  booktitle    = {Proceedings of the 2023 {ACM-SIAM} Symposium on Discrete Algorithms,
                  {SODA} 2023, Florence, Italy, January 22-25, 2023},
  pages        = {2567--2588},
  publisher    = {{SIAM}},
  year         = {2023},
  url          = {https://doi.org/10.1137/1.9781611977554.ch98},
  doi          = {10.1137/1.9781611977554.CH98},
  timestamp    = {Fri, 17 Feb 2023 09:28:57 +0100},
  biburl       = {https://dblp.org/rec/conf/soda/FischerHM23.bib},
  bibsource    = {dblp computer science bibliography, https://dblp.org}
}

@inproceedings{FlinGHKN23,
  author       = {Maxime Flin and
                  Mohsen Ghaffari and
                  Magn{\'{u}}s M. Halld{\'{o}}rsson and
                  Fabian Kuhn and
                  Alexandre Nolin},
  editor       = {Kunal Agrawal and
                  Julian Shun},
  title        = {Coloring Fast with Broadcasts},
  booktitle    = {Proceedings of the 35th {ACM} Symposium on Parallelism in Algorithms
                  and Architectures, {SPAA} 2023, Orlando, FL, USA, June 17-19, 2023},
  pages        = {455--465},
  publisher    = {{ACM}},
  year         = {2023},
  url          = {https://doi.org/10.1145/3558481.3591095},
  doi          = {10.1145/3558481.3591095},
  timestamp    = {Fri, 04 Apr 2025 18:15:05 +0200},
  biburl       = {https://dblp.org/rec/conf/spaa/Flin0HKN23.bib},
  bibsource    = {dblp computer science bibliography, https://dblp.org}
}

@article{FlinM24,
  author       = {Maxime Flin and
                  Parth Mittal},
  title        = {({\(\Delta\)} + 1) vertex coloring in O(n) communication},
  journal      = {Distributed Comput.},
  volume       = {38},
  number       = {1},
  pages        = {19--29},
  year         = {2025},
  url          = {https://doi.org/10.1007/s00446-024-00475-3},
  doi          = {10.1007/S00446-024-00475-3},
  timestamp    = {Thu, 27 Feb 2025 13:16:52 +0100},
  biburl       = {https://dblp.org/rec/journals/dc/FlinM25.bib},
  bibsource    = {dblp computer science bibliography, https://dblp.org}
}

@article{CzumajDP21,
  author       = {Artur Czumaj and
                  Peter Davies and
                  Merav Parter},
  title        = {Simple, Deterministic, Constant-Round Coloring in Congested Clique
                  and {MPC}},
  journal      = {{SIAM} J. Comput.},
  volume       = {50},
  number       = {5},
  pages        = {1603--1626},
  year         = {2021},
  url          = {https://doi.org/10.1137/20M1366502},
  doi          = {10.1137/20M1366502},
  timestamp    = {Tue, 15 Jul 2025 16:32:17 +0200},
  biburl       = {https://dblp.org/rec/journals/siamcomp/CzumajDP21.bib},
  bibsource    = {dblp computer science bibliography, https://dblp.org}
}

@inproceedings{DBLP:conf/sosa/AssadiY25,
  author       = {Sepehr Assadi and
                  Helia Yazdanyar},
  editor       = {Ioana Oriana Bercea and
                  Rasmus Pagh},
  title        = {Simple Sublinear Algorithms for ({\(\Delta\)} + 1) Vertex Coloring
                  via Asymmetric Palette Sparsification},
  booktitle    = {2025 Symposium on Simplicity in Algorithms, {SOSA} 2025, New Orleans,
                  LA, USA, January 13-15, 2025},
  pages        = {1--8},
  publisher    = {{SIAM}},
  year         = {2025},
  url          = {https://doi.org/10.1137/1.9781611978315.1},
  doi          = {10.1137/1.9781611978315.1},
  timestamp    = {Mon, 10 Mar 2025 16:06:56 +0100},
  biburl       = {https://dblp.org/rec/conf/sosa/AssadiY25.bib},
  bibsource    = {dblp computer science bibliography, https://dblp.org}
}


\clearpage

\appendix
\section{Appendix: A Self-contained Sublinear Time Algorithm}\label{sec:warm-up}

In addition to our main result, we also present a very simple sublinear time algorithm for $(\Delta+1)$ coloring in the following theorem. This algorithm has recently and independently been discovered in~\cite{ferber2025improved} 
and has also appeared as part of lecture notes for different courses~\cite{Lec24,HW} at this point. Finally, this algorithm can also be seen as a simple adjustment to the sublinear time $(\Delta+o(\Delta))$ coloring algorithm of~\cite{MorrisS21}.

\begin{theorem}\label{thm:warm-up}
	There is a randomized algorithm that given any graph $G$ with maximum degree $\Delta$ via adjacency list and matrix access, 
	outputs a $(\Delta+1)$ coloring of $G$ in  $O(n \cdot \sqrt{n\log{n}})$ expected time. 
\end{theorem}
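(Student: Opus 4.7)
The plan is to split on the value of $\Delta$. If $\Delta \leq \sqrt{n\log n}$, I would run the textbook greedy algorithm, iterating over the vertices in an arbitrary order and using adjacency-list queries to walk $N(v)$ and find an unused color in $[\Delta+1]$; this takes $O(n\Delta) = O(n\sqrt{n\log n})$ time deterministically. The interesting case is $\Delta > \sqrt{n\log n}$, which I sketch next.

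For $\Delta > \sqrt{n\log n}$, I would sample a uniformly random permutation $\pi : V \to [n]$, fix a threshold $T := C_0 \cdot n\sqrt{\log(n)/\Delta}$ for a sufficiently large constant $C_0$, and maintain color classes $C[c] \subseteq V$ for each $c \in [\Delta+1]$ (initially empty). Processing the vertices in increasing order of $\pi$, I treat each $v$ in one of two ways depending on its position $i = \pi(v)$: \textbf{(a)} If $i \leq n - T$, repeatedly sample a color $c \in [\Delta+1]$ uniformly at random (with replacement), check whether $c$ is used by some already-colored neighbor of $v$ by iterating through $u \in C[c]$ and querying the adjacency matrix on $(u,v)$, and assign $v$ the first sampled color that passes this test. \textbf{(b)} If $i > n - T$, query $N(v)$ directly via adjacency-list queries, read off the colors currently assigned to its neighbors, and assign any unused color from $[\Delta+1]$.

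For the analysis, the key calculation is the expected work of branch (a) for a single vertex $v$ with $\pi(v) = i$. If $U(v) \subseteq [\Delta+1]$ is the set of colors not used by any already-colored neighbor of $v$, then the number of tries is geometric with parameter $|U(v)|/(\Delta+1)$; a Wald-type calculation, conditioning on the stopping time and weighting the two conditional distributions of the sampled color (uniform on $U(v)$ for the successful try, uniform on its complement for every unsuccessful try), shows that the expected total number of adjacency-matrix queries for $v$ equals $\sum_c |C[c]|/|U(v)| = (i-1)/|U(v)|$, since $\sum_c |C[c]| = i-1$ at the time $v$ is processed. Using $|U(v)| \geq \Delta + 1 - \degback{v}$ together with the fact that $\degback{v}$ is hypergeometric of mean $\deg(v)(i-1)/(n-1)$, a standard Chernoff bound yields $\degback{v} \leq (1+o(1))\deg(v)\,i/n$ with probability $1 - n^{-2}$, provided the relative deviation $\sqrt{6n\log(n)/(\deg(v)\,i)}$ is dominated by the relative gap $(n-i)/i$. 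The choice $T = \Theta(n\sqrt{\log(n)/\Delta})$ guarantees this bound for every $i \leq n - T$ and every $\deg(v) \leq \Delta$, yielding $\Delta + 1 - \degback{v} = \Omega(\Delta(n-i)/n)$ and hence an expected per-vertex cost of $O(in/(\Delta(n-i)))$ in Phase (a). Summing via the harmonic series, the total expected Phase (a) cost is $O(n^2 \log(n)/\Delta)$, while Phase (b) contributes $O(T\Delta) = O(n\sqrt{\Delta \log n})$. For every $\Delta \in [\sqrt{n\log n}, n]$, both terms are $O(n\sqrt{n\log n})$.

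The main obstacle is the Wald-type identity used to evaluate the expected cost of the sampling loop cleanly: the successful and unsuccessful tries draw their per-try cost from different conditional color distributions, but when weighted by the probabilities $\Pr(T \geq k)$ the two streams collapse into one harmonic-style average across \emph{all} color classes, giving the clean formula $(i-1)/|U(v)|$. The secondary difficulty is calibrating $T$: taking it too small lets the Chernoff lower-tail fluctuation in $\degback{v}$ erase the $\Omega(\Delta(n-i)/n)$ gap needed for Phase (a), while taking it too large makes the direct-query fallback in Phase (b) blow past the target budget; the choice $T = \Theta(n\sqrt{\log(n)/\Delta})$ balances both constraints exactly.
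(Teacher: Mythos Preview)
Your approach is essentially correct but takes a noticeably heavier route than the paper. Both arguments share the same split on $\Delta$, the same random-permutation greedy algorithm, and the same Wald-type identity $\Exp[X_v\mid\pi]\leq (i-1)/|U(v)|\leq n/(\Delta+1-\degback{v})$. The divergence is in how the sum $\sum_v n/(\Delta+1-\degback{v})$ is controlled. You invoke hypergeometric concentration on $\degback{v}$, which forces you to carve out a threshold $T=\Theta(n\sqrt{\log n/\Delta})$ and fall back to brute-force adjacency-list coloring for the last $T$ vertices; the paper instead observes that the \emph{rank} of $v$ among $\{v\}\cup N(v)$ is uniform, so $\degback{v}$ is uniform on $\{0,\dots,\deg(v)\}$ and one can compute $\Exp_\pi[1/(\Delta+1-\degback{v})]=\frac{1}{\deg(v)+1}\sum_{d=0}^{\deg(v)}\frac{1}{\Delta+1-d}\leq O(\log\Delta)/\Delta$ directly---no concentration, no threshold, no Phase~(b). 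This yields the cleaner bound $O((n^2/\Delta)\log\Delta)$ and keeps the algorithm a single loop. Since the explicit purpose of this appendix (as stated in the introduction) is an argument that ``only requires elementary probabilistic arguments (not even concentration inequalities),'' the paper's route is the intended one.

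One small gap in your sketch: the Chernoff step ``$\sqrt{6n\log n/(\deg(v)\,i)}\ll (n-i)/i$'' is only guaranteed by your choice of $T$ when $\deg(v)=\Omega(\Delta)$. For $\deg(v)\leq \Delta/2$ you need the trivial bound $\Delta+1-\degback{v}\geq \Delta+1-\deg(v)\geq \Delta/2$ instead; this is easy but should be said. You should also note that on the $n^{-2}$-probability failure events the per-vertex cost is still at most $n$ (since $|U(v)|\geq 1$ always), so their total contribution to the expectation is $O(1)$.
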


We note that unlike our sublinear time algorithm in~\Cref{thm:sublinear} which was \emph{non-adaptive}, namely made all its queries in advance before seeing the answer to them, the current algorithm is adaptive and needs to 
receive the answer to each query before deciding its next query. 

The algorithm in~\Cref{thm:warm-up} is quite similar to the standard greedy algorithm for $(\Delta+1)$ coloring. It iterates over the vertices
and colors each one greedily by finding a color not used among the neighbors of this vertex yet (which exists by pigeonhole principle). However, unlike the greedy algorithm, $(i)$ it crucially needs to iterate
over the vertices in a random order, and, $(ii)$ instead of 
iterating over the neighbors of the current vertex to find an available color, it samples a color randomly for this vertex and then iterates over all vertices with this color to make sure they are not neighbor to the current vertex. 
Formally, the algorithm is as follows. 

\begin{algorithm}\label{alg:warmup}
\SetKwFor{For}{For}{}{end for}
\caption{An (adaptive) sublinear time algorithm for $(\Delta+1)$ vertex coloring. }

Let $C_1,C_2,\ldots,C_{\Delta+1}$ be the \textbf{color classes} to be output at the end, initially set to empty. \;

Pick a permutation $\pi$ of vertices in $V$ uniformly at random.\;

\For{$v \in V$ in the order of the permutation $\pi$:}{

Sample $c \in [\Delta+1]$ uniformly at random. \label{line:reset} \;
For every vertex $u \in C_c$, check if $(u,v)$ is an edge in $G$; if \emph{Yes}, restart from Line~\eqref{line:reset}. \;
If the algorithm reaches this step, color the vertex $v$ with $c$ and add $v$ to $C_c$. \;
}

\end{algorithm}

It is easy to see that this algorithm never outputs a wrong coloring (namely if it ever terminates, its answer is always correct). Any new vertex colored does not create a conflict with previously colored vertices (given the algorithm explicitly checks to not color $v$ with a color $c$ if one of its neighbors is already colored $c$) and thus at the end, there cannot be any monochromatic edge in the graph. The interesting part of the analysis is to show that the algorithm terminates quickly enough, which is captured by the following lemma. 

\begin{lemma}\label{lem:warmup-runtime}
	For any input graph $G=(V,E)$, the expected runtime of~\Cref{alg:warmup} is 
	\[
	O\left(\frac{n^2}{\Delta} \cdot \log{\Delta}\right) ~\text{time}. 
	\] 
\end{lemma}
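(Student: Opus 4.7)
The plan is to bound $\mathbb{E}[T_v]$ for each vertex $v$ (the total work the algorithm performs while processing $v$) and then sum over $v$.

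First I would fix the permutation $\pi$ and condition on all randomness used to color vertices $u$ with $\pi(u) < \pi(v)$. This fixes the color-class sizes $|C_c|$ at the moment $v$ is processed as well as the set $B_v \subseteq [\Delta+1]$ of colors already used by neighbors of $v$. Inside the inner loop the algorithm draws iid uniform colors $c_1, c_2, \ldots \in [\Delta+1]$, stops at the geometric time $N_v := \min\{j : c_j \notin B_v\}$, and pays cost $X_j = |C_{c_j}|$ on trial $j$. Because $|B_v| \le \deg(v) \le \Delta$, $N_v$ is a well-defined stopping time with $\mathbb{E}[N_v] = (\Delta+1)/(\Delta+1-|B_v|)$, and the $X_j$ are iid with $\mathbb{E}[X_1] = (\pi(v)-1)/(\Delta+1)$ (since $\sum_c |C_c| = \pi(v)-1$), so Wald's identity gives
\[
\mathbb{E}[T_v \mid \text{state}] \;=\; \mathbb{E}[N_v]\,\mathbb{E}[X_1] \;=\; \frac{\pi(v)-1}{\Delta+1-|B_v|} \;\le\; \frac{\pi(v)-1}{\Delta+1-\degback{v}},
\]
where in the last step I use $|B_v| \le \degback{v}$. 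The right-hand side depends only on $\pi$.

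Next I would average over $\pi$ via an order-statistic identity. Let $r := \degback{v}+1$ be the rank of $v$ within the set $\{v\}\cup N(v)$ of size $\deg(v)+1$ under $\pi$. By symmetry of the permutation, $r$ is uniform on $\{1,\ldots,\deg(v)+1\}$ and, conditional on $r$, $\pi(v)$ is the $r$-th order statistic of a uniformly random size-$(\deg(v)+1)$ subset of $[n]$, so $\mathbb{E}[\pi(v)\mid r] \le rn/(\deg(v)+1)$. Plugging in,
\[
\mathbb{E}_\pi[T_v] \;\le\; \frac{n}{(\deg(v)+1)^2}\sum_{r=1}^{\deg(v)+1}\frac{r}{\Delta+2-r}.
\]
A short case split on $\deg(v)$ finishes: if $\deg(v) \le \Delta/2$, every denominator is $\ge \Delta/2$ so the sum is $O(\deg(v)^2/\Delta)$ and $\mathbb{E}_\pi[T_v] = O(n/\Delta)$; if $\deg(v) > \Delta/2$, the substitution $s = \Delta+2-r$ bounds the sum by $(\Delta+2)H_{\Delta+1} = O(\Delta\log\Delta)$, and dividing by $(\deg(v)+1)^2 = \Omega(\Delta^2)$ yields $\mathbb{E}_\pi[T_v] = O(n\log\Delta/\Delta)$. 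Summing over the $n$ vertices gives the claimed $O((n^2/\Delta)\log\Delta)$.

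The main obstacle will be the Wald step: one must be careful that the trials are genuinely iid given the state at the start of processing $v$ so that $N_v$ is an honest stopping time with finite mean, and must observe that, although the cost $X_j$ depends on the sampled color, its expectation still equals the total color-class mass divided by $\Delta+1$. Once this is in place, the order-statistic identity on $(\pi(v),\degback{v})$ is what converts the awkward random denominator into a clean harmonic sum and produces the $\log\Delta$ factor; the seemingly natural alternative of first bounding $\degback{v}$ via the hypergeometric tail (\Cref{thm:book}) and union-bounding over vertices appears to deliver only a $\log n$ factor.
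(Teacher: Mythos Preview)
Your argument is correct, but the route you take in the averaging-over-$\pi$ step is more elaborate than necessary. The paper's proof of \Cref{lem:warmup-runtime} obtains the same identity $\Exp[X_v\mid\text{state}]=\Exp[|C_c|]/\Pr(c\notin B_v)$ (via a one-line recursion rather than Wald, which amounts to the same thing) but then immediately relaxes the numerator from $\pi(v)-1$ to $n$, arriving at $\Exp[X_v\mid\pi]\le n/(\Delta+1-\degback{v})$. This sacrifices the information in $\pi(v)$ but buys a much simpler averaging step: now only the \emph{marginal} of $\degback{v}$ is needed, and since $\degback{v}$ is uniform on $\{0,\ldots,\deg(v)\}$, one gets
\[
\Exp_\pi\!\bracket{\frac{n}{\Delta+1-\degback{v}}}=\frac{n}{\deg(v)+1}\sum_{d=0}^{\deg(v)}\frac{1}{\Delta+1-d}\le \frac{n}{\Delta+1}\sum_{d=0}^{\Delta}\frac{1}{\Delta+1-d}=O\!\paren{\frac{n\log\Delta}{\Delta}},
\]
using only that the average of the $\deg(v)+1$ smallest terms of $\{1/(\Delta+1-d)\}_{d=0}^{\Delta}$ is at most the average of all of them. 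So the $\log\Delta$ (rather than $\log n$) factor already falls out of this elementary argument---your order-statistic identity on $(\pi(v),\degback{v})$ and the subsequent case split on $\deg(v)$ are not needed for it. Your tighter numerator could in principle yield a better constant, but you do not exploit that, so the extra machinery is not buying anything here.
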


\noindent
To continue, we need to set up some notation.  For any vertex $v \in V$, we define the random variable $X_v$ as the number of $(u,v)$ queries checked by the algorithm in the for-loop of coloring $v$. Additionally, for any $v \in V$ and permutation $\pi$ picked over $V$, define $\Nback{v}$ as the neighbors $u$ of $v$ with $\pi(u) < \pi(v)$, namely the ones that are colored \emph{before} $v$ by Algorithm~\Cref{alg:warmup}. Let $\degback{v}:= \card{\Nback{v}}$. We start with the basic observation that the runtime of~\Cref{alg:warmup} can be stated in terms of the variables $\set{X_v}_{v \in V}$. 

\begin{observation}\label{obs:warmup-Xv}
	The expected runtime of~\Cref{alg:warmup} is $O(\sum_{v \in V} \expect{X_v})$. 
\end{observation}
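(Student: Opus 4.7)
The plan is to decompose the runtime into three parts---initial setup, per-attempt bookkeeping, and per-query cost---and argue that each is absorbed into the claimed bound. First I would note that sampling $\pi$ and initializing the color classes $C_1,\ldots,C_{\Delta+1}$ takes $O(n)$ time, and that each $(u,v)$ check in the inner for-loop is an $O(1)$ adjacency-matrix lookup, so by definition the total query cost accrued while coloring $v$ is $O(X_v)$.

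The main step is to bound the number of attempts $A_v$ that the algorithm makes at Line~\eqref{line:reset} while processing $v$. The key point is that every \emph{failed} attempt necessarily performs at least one query: a restart is triggered only after the inner for-loop finds some $u \in C_c$ that is a neighbor of $v$, and finding such a $u$ requires at least one $(u,v)$ query. Hence the number of failed attempts is at most $X_v$, so $A_v \leq X_v + 1$, and since each attempt incurs only $O(1)$ of overhead (sampling a random color and bookkeeping), the total non-query overhead for $v$ is $O(X_v + 1)$.

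Summing over $v \in V$ and adding the $O(n)$ initialization cost, the total runtime is $O\paren{n + \sum_v X_v}$. Taking expectation and absorbing the additive $O(n)$ into the bound on $\sum_v \expect{X_v}$ (which is $\Omega(n)$ in the relevant parameter range, e.g., as a byproduct of \Cref{lem:warmup-runtime} together with $\Delta \leq n$) yields the claim. I don't anticipate a real obstacle: the entire argument is an accounting exercise, and the only point that requires a moment's care is observing that a failed attempt is guaranteed to incur at least one query, which is precisely what couples the number of restarts to $X_v$.
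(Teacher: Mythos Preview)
Your argument is correct and follows the same basic idea as the paper---the runtime is dominated by the total number of adjacency queries, which is exactly $\sum_v X_v$---but you are considerably more careful than the paper, whose proof is a two-line hand-wave that simply asserts ``the expected runtime is the number of all queries checked'' and applies linearity of expectation, without addressing per-attempt overhead or initialization at all. Your observation that every failed attempt incurs at least one query (hence $A_v \leq X_v + 1$) is exactly the right way to make that hand-wave rigorous.

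One minor point: your justification for absorbing the additive $O(n)$ by appealing to \Cref{lem:warmup-runtime} is slightly off, since that lemma gives an \emph{upper} bound on the runtime, not a lower bound on $\sum_v \expect{X_v}$. The cleanest fix is to note directly that $\expect{X_v \mid \pi} \geq \expect{|C_c| \mid \pi} = (\pi(v)-1)/(\Delta+1)$ (the numerator in \Cref{clm:warmup-Xv-step}), so summing over $v$ gives $\sum_v \expect{X_v} \geq \binom{n}{2}/(\Delta+1) = \Omega(n)$ since $\Delta \leq n-1$. Alternatively, you could simply state the conclusion as $O\!\paren{n + \sum_v \expect{X_v}}$, which is all that is actually used downstream.
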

\begin{proof}
	By definition, $X_v$ is the number of queries checked in the for-loop of coloring $v$. The algorithm repeats this for-loop for all $v \in V$, so the expected runtime is the number of all queries checked in this algorithm which is proportional to $\expect{\sum_{v\in V} X_v}$. Applying linearity of expectation concludes the proof. 
\end{proof}

Our task is now to bound each of $\expect{X_v}$ for $v \in V$ to bound the runtime of the algorithm using~\Cref{obs:warmup-Xv}. 

\begin{lemma}\label{lem:warmup-Xv-bound}
	For any vertex $v \in V$ and any choice of the permutation $\pi$:  
	\[
	\expect{X_v \mid \pi} \leq \frac{n}{\Delta+1-\degback{v}}.
	\]
\end{lemma}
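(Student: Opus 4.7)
My plan is to decompose the work of coloring $v$ into independent ``attempts'' and bound each piece separately. Fix $v \in V$ and condition on the permutation $\pi$; this fixes $\degback{v}$ and the size of the set $R_v$ of vertices already colored when we reach $v$, namely $|R_v| = \pi(v)-1$. The algorithm repeatedly samples a color $c \in [\Delta+1]$ uniformly at random (independently across attempts), scans $C_c$ until it either finds a neighbor of $v$ (in which case it restarts) or exhausts $C_c$ (in which case it succeeds). Let $T$ be the number of attempts, and let $Y_i$ be the number of $(u,v)$-adjacency queries performed in the $i$-th attempt, so that $X_v = \sum_{i=1}^T Y_i$.

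First I would bound $\expect{T}$. Because the partial coloring so far is proper, the neighbors in $\Nback{v}$ occupy $\degback{v}$ distinct color classes, so an attempt succeeds iff the sampled color $c$ avoids these classes. Hence each attempt succeeds independently with probability $(\Delta+1-\degback{v})/(\Delta+1)$, making $T$ geometric with $\expect{T} = (\Delta+1)/(\Delta+1-\degback{v})$.

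Next I would bound $\expect{Y_1}$. In any single attempt, the number of queries is at most $|C_c|$ for the sampled color $c$, so by averaging over $c$,
\[
\expect{Y_1} \;\le\; \frac{1}{\Delta+1}\sum_{c=1}^{\Delta+1}|C_c| \;=\; \frac{|R_v|}{\Delta+1} \;=\; \frac{\pi(v)-1}{\Delta+1}.
\]

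Finally I would combine the two. The attempts are i.i.d., so by the telescoping identity $\expect{X_v \mid \pi}=\sum_{i\ge1}\expect{Y_i\,\mathbf{1}[T\ge i]}$, together with the fact that the event $\{T\ge i\}$ depends only on attempts $1,\ldots,i-1$ and is therefore independent of $Y_i$, we get $\expect{X_v \mid \pi} = \expect{Y_1}\cdot\expect{T}$ (Wald's identity). Multiplying the two bounds gives
\[
\expect{X_v \mid \pi} \;\le\; \frac{\pi(v)-1}{\Delta+1}\cdot\frac{\Delta+1}{\Delta+1-\degback{v}} \;\le\; \frac{n}{\Delta+1-\degback{v}},
\]
as claimed. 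The only subtle point I anticipate is justifying that $\expect{X_v \mid \pi} = \expect{T}\cdot\expect{Y_1}$ even though, within a single attempt, $Y_i$ is correlated with whether that attempt succeeds (a failed attempt may stop early once a neighbor is hit). This correlation is harmless because $T$ is a stopping time for the i.i.d.\ sequence of attempts, so Wald applies directly; alternatively, the telescoping argument above sidesteps it entirely by using only independence across attempts.
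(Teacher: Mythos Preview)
Your argument is correct and is essentially the paper's own proof phrased in terms of Wald's identity rather than a one-step recursion: the paper sets up $\expect{X_v}=\expect{|C_c|}+\tfrac{|B(v)|}{\Delta+1}\expect{X_v}$ and solves it, which is exactly the Wald decomposition $\expect{X_v}=\expect{Y_1}\cdot\expect{T}$ you use. One small slip: a proper partial coloring does \emph{not} force the vertices of $\Nback{v}$ into $\degback{v}$ \emph{distinct} color classes (non-adjacent neighbors of $v$ may share a color), only into at most $\degback{v}$ classes; this makes your success probability a lower bound and hence $\expect{T}\le(\Delta+1)/(\Delta+1-\degback{v})$ rather than an equality, but the final inequality goes through unchanged.
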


To prove~\Cref{lem:warmup-Xv-bound}, we first need the following claim. 

\begin{claim}\label{clm:warmup-Xv-step}
	Fix any vertex $v \in V$, any choice of the permutation $\pi$, and any assignment of colors $C(u_1),C(u_2),\ldots$ by~\Cref{alg:warmup} to all vertices that appear before $v$ in $\pi$. 
	Then, 
	\[
	\expect{X_v\mid \pi} = \frac{\expect{\card{C_c}\mid \pi}}{\Pr\paren{\text{$c$ does not appear in $\Nback{v}$} \mid \pi, C(u_1),C(u_2),\ldots}},
	\]
	where in the RHS, both the expectation and the probability are taken with respect to a color $c$ chosen uniformly at random from $[\Delta+1]$. 
\end{claim}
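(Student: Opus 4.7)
The plan is to condition on $\pi$ and the already-determined assignment $C(u_1), C(u_2), \ldots$, which fixes each color class $C_c$, the set $\Nback{v}$, and the subset $\mathcal{B} \subseteq [\Delta+1]$ of colors already used by vertices in $\Nback{v}$. Once this conditioning is in place, the inner loop processing $v$ in~\Cref{alg:warmup} becomes a clean random process: draw iid uniform colors $c_1, c_2, \ldots \in [\Delta+1]$, perform $|C_{c_i}|$ edge queries on the $i$-th draw, and stop at the first index $T$ with $c_T \notin \mathcal{B}$. Let $p := (\Delta+1-|\mathcal{B}|)/(\Delta+1)$, which is strictly positive since $|\mathcal{B}| \leq \deg(v) \leq \Delta$ and matches exactly the denominator on the RHS of the claim.

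Under this process $T$ is geometric with parameter $p$ (so $\expect{T} = 1/p$), and $X_v = \sum_{i=1}^{T} |C_{c_i}|$. The key observation is that the event $\{T \geq i\} = \{c_1, \ldots, c_{i-1} \in \mathcal{B}\}$ depends only on $c_1, \ldots, c_{i-1}$ and is therefore independent of $c_i$. Hence by linearity,
\[
\expect{X_v \mid \pi, C(u_1), \ldots} = \sum_{i \geq 1} \expect{|C_{c_i}| \cdot \mathbf{1}[T \geq i]} = \expect{|C_c|} \cdot \sum_{i \geq 1} (1-p)^{i-1} = \frac{\expect{|C_c|}}{p},
\]
which is the stated formula—in one line, this is Wald's identity applied to a stopped iid sum, with $c$ in the final expression uniform over $[\Delta+1]$ exactly as in the claim.

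The only subtlety is that the pseudocode's ``if Yes, restart'' clause can be read as aborting an iteration on the first discovered conflict, in which case $X_v$ would be strictly smaller than $\sum_{i=1}^{T} |C_{c_i}|$. I would interpret the algorithm as always probing all of $C_{c_i}$ within an iteration before deciding whether to restart; this can only inflate the query count, so the claim either holds with equality (under this interpretation) or as an upper bound (under the early-abort interpretation), and either version is enough for~\Cref{lem:warmup-Xv-bound} and the final runtime analysis. There is no real obstacle beyond committing to the correct random-process viewpoint; the actual computation is a direct application of independence and the geometric-series identity.
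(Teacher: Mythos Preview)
Your proof is correct and rests on the same observation as the paper's: once $\pi$ and the prior colors are fixed, processing $v$ is a geometric-retry experiment with per-trial cost $|C_c|$ and success probability $p = 1 - |\mathcal{B}|/(\Delta+1)$. The only organisational difference is that the paper conditions on the first sampled color to obtain the self-referential equation $\expect{X_v\mid\pi} = \expect{|C_c|\mid\pi} + (1-p)\,\expect{X_v\mid\pi}$ and solves, whereas you unroll the sum $X_v=\sum_{i=1}^{T}|C_{c_i}|$ and use the Wald-type independence of $\{T\geq i\}$ from $c_i$; these are two standard, equivalent computations. Your remark about the early-abort reading of ``if Yes, restart'' is well taken---the paper glosses over the same point (it writes ``up to $|C_c|$ queries'' yet keeps equality), and as you note either interpretation suffices for the downstream bound.
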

\begin{proof}
	Define the colors $B(v)$ as the set of colors that appear in $\Nback{v}$, that is 
	\[
	B(v) := \set{c \in [\Delta+1] \mid  \text{there exists $u\in\Nback{v}$ with $c(u)=c$}}.
	\]
	
	For every color $c$, if $c$ is in $B(v)$ then $v$ cannot be colored by $c$, and otherwise it can. The probability of picking each color $c$ is $1/(\Delta +1)$. 
	For $c \notin B(v)$, the number of needed queries before coloring $v$ is $\card{C_c}$. 
	For $c \in B(v)$, the algorithm first needs to check up to $\card{C_c}$ queries to know this color is not available to $v$, and then it simply needs to repeat the same exact process.
	As such, 
	\[
	\expect{X_v\mid \pi}  = \sum_{c \notin B(v)} \frac{1}{\Delta+1}\card{C_{c}} + \sum_{c \in B(v)} \frac{1}{\Delta+1}(\card{C_{c}}+ \expect{X_v\mid \pi} ) = \expect{\card{C_c}\mid \pi} + \frac{\card{B(v)}}{\Delta +1} \cdot \expect{X_v\mid \pi}.
	\]
	
	We can also define the probability of $c$ not appearing in $\Nback{v}$ in terms of  $\card{B(v)}$ as below:
	
	\[
	\Pr\paren{\text{$c$ does not appear in $\Nback{v}$} \mid \pi, C(u_1),C(u_2),\ldots} = 1 - \frac{\card{B(v)}}{\Delta+1}. 
	\]
	
	By solving the recursive equation above, we get that
	\[
	\expect{X_v\mid \pi} =\frac{\expect{\card{C_c}\mid \pi}}{\Pr\paren{\text{$c$ does not appear in $\Nback{v}$} \mid \pi, C(u_1),C(u_2),\ldots}}. \qedhere
	\]
\end{proof}

Using~\Cref{clm:warmup-Xv-step}, we can conclude the proof of~\Cref{lem:warmup-Xv-bound}. 
\begin{proof}[Proof of~Lemma~\ref{lem:warmup-Xv-bound}]
	Each color $c' \in [\Delta+1]$ is chosen with probability $1/(\Delta+1)$ in Line~\eqref{line:reset} of the algorithm. Thus, 
	\[
	\expect{\card{C_c} \mid \pi} = \sum_{c'\in [\Delta +1]}{\Pr\paren{c=c' \mid \pi}\cdot \card{C_{c'}} } \leq \frac{n}{\Delta +1}, 
	\]
	as the sets $\set{C_{c'} \mid c' \in [\Delta+1]}$ are disjoint and partition the already-colored vertices which are at most $n$ vertices. 
	At this point, at most $\degback{v}$ colors have been used in the neighborhood of $v$ and thus cannot be used to color $v$. As such,
	\[
	\Pr\paren{\text{$c$ does not appear in $\Nback{v}$} \mid \pi, C(u_1),C(u_2),\ldots} \geq 1-\frac{\degback{v}}{\Delta +1}.
	\]
	Using~\Cref{clm:warmup-Xv-step} and the above bounds, we conclude
	\begin{align*}
		\expect{X_v \mid \pi } &= \frac{\expect{\card{C_c} \mid \pi}}{\Pr\paren{\text{$c$ does not appear in $\Nback{v}$} \mid \pi, C(u_1),C(u_2),\ldots}} \\
		&\leq \frac{n}{\Delta +1 -\degback{v}}.\qedhere
	\end{align*}
\end{proof}

By~\Cref{lem:warmup-Xv-bound} (and~\Cref{obs:warmup-Xv}), for any choice of the permutation $\pi$ in~\Cref{alg:warmup},  
\begin{align}
	\expect{\text{runtime of~\Cref{alg:warmup}} \mid \pi} = O(1) \cdot \sum_{v \in V} \frac{n}{\Delta+1-\degback{v}}.  \label{eq:warmup-suffices}
\end{align}

We now consider the randomness of  $\pi$ to bound the RHS above in expectation over $\pi$. 
\begin{lemma}\label{lem:warmup-RHS-suffices}
	We have, 
	\[
	\Exp_{\pi}\bracket{\sum_{v \in V} \frac{n}{\Delta+1-\degback{v}}} = O\left(\frac{n^2}{\Delta} \cdot \log{\Delta}\right).
	\] 
\end{lemma}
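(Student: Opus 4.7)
The plan is to bound $\Exp_\pi[n/(\Delta+1-\degback{v})]$ separately for each fixed $v \in V$ and then simply sum the resulting per-vertex bounds over the $n$ vertices.

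The key observation for a fixed vertex $v$ with $d := \deg(v) \leq \Delta$ is that the distribution of $\degback{v}$ under a uniformly random $\pi$ has a very clean form: restricting $\pi$ to the $d+1$ vertices of $\{v\} \cup N(v)$ gives a uniformly random permutation of these, so the rank of $v$ among them is uniform on $\{1,2,\ldots,d+1\}$. Since $\degback{v}$ equals that rank minus one, $\degback{v}$ is uniformly distributed on $\{0,1,\ldots,d\}$. Plugging this in gives
\[
\Exp_\pi\!\bracket{\frac{1}{\Delta+1-\degback{v}}} = \frac{1}{d+1}\sum_{k=0}^{d}\frac{1}{\Delta+1-k}.
\]

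The next step is to show that this expectation is $O(\log \Delta / \Delta)$ regardless of the value of $d \in \{0,1,\ldots,\Delta\}$, via a simple case split. If $d \leq \Delta/2$, then every denominator satisfies $\Delta+1-k \geq \Delta/2 + 1$, so each of the $d+1$ terms is at most $2/(\Delta+2)$ and the whole average is $\leq 2/(\Delta+2) = O(1/\Delta)$. If instead $d > \Delta/2$, we bound the sum crudely by the harmonic sum $\sum_{k=0}^{d}\frac{1}{\Delta+1-k} \leq H_{\Delta+1} = O(\log\Delta)$, and divide by $d+1 > \Delta/2$ to obtain $O(\log \Delta / \Delta)$. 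Either way the per-vertex bound is $\Exp_\pi[n/(\Delta+1-\degback{v})] = O(n \log \Delta / \Delta)$.

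Finally, summing this uniform per-vertex bound over $v \in V$ (via linearity of expectation) yields
\[
\Exp_\pi\bracket{\sum_{v \in V} \frac{n}{\Delta+1-\degback{v}}} = n \cdot O\!\paren{\frac{n\log \Delta}{\Delta}} = O\!\paren{\frac{n^2}{\Delta}\log \Delta},
\]
as claimed. There is no real obstacle here: everything reduces to the symmetry observation that $\degback{v}$ is uniform on $\{0,\ldots,\deg(v)\}$, followed by an elementary two-case estimate of a truncated harmonic sum. The only mild care needed is to avoid losing a factor of $n$ when $v$ has very small degree (where the denominator $\Delta+1-\degback{v}$ cannot be close to zero), which is exactly what the $d \leq \Delta/2$ case handles.
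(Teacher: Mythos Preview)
Your proof is correct and follows essentially the same approach as the paper: reduce to a per-vertex bound via linearity, observe that $\degback{v}$ is uniform on $\{0,\ldots,\deg(v)\}$, and show the resulting expectation is $O(\log\Delta/\Delta)$. The only difference is in the last step, where the paper replaces your two-case estimate with the one-line observation that $\frac{1}{d+1}\sum_{k=0}^{d}\frac{1}{\Delta+1-k}$ is the average of the $d+1$ smallest elements of $\{1/(\Delta+1-k)\}_{k=0}^{\Delta}$ and hence is at most the average of all of them, namely $\frac{1}{\Delta+1}H_{\Delta+1}=O(\log\Delta/\Delta)$.
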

\begin{proof}
	Using the linearity of expectation, we have
	\[
	\Exp_{\pi}\bracket{\sum_{v \in V} \frac{n}{\Delta+1-\degback{v}}} =n \cdot \sum_{v\in V} \Exp_{\pi}\bracket{\frac{1}{\Delta+1-\degback{v}}}.
	\]
	
	For each permutation $\pi$, $\degback{v}$ depends on $v$'s relative position in the permutation $\pi$ with respect to its neighbors and it can vary from $0$ to $\deg(v)$. Each of these positions happens with the same probability ${1}/({\deg(v)+1})$. 
	Hence, for every vertex $v\in V$,
	\begin{align*}
		\Exp_{\pi}\bracket{\frac{1}{\Delta+1-\degback{v}}} &= \sum_{d=0}^{\deg(v)}\frac{1}{\deg(v)+1}\cdot \frac{1}{\Delta +1-d} \leq \sum_{d=0}^{\Delta}\frac{1}{\Delta+1}\cdot \frac{1}{\Delta +1-d} = O\left(\frac{\log{\Delta}}{\Delta}\right),
	\end{align*}
	where the inequality holds because if we let $A := \set{n/(\Delta + 1 - d)}_{d=0}^{\Delta}$, then, in the LHS, we are taking the average of the smallest $\deg(v) + 1$ numbers in $A$, whereas in the RHS we are taking the average of all of $A$.
	Plugging in this bound in the equation above concludes the proof. 
\end{proof}

\Cref{lem:warmup-runtime} now follows immediately from \Cref{eq:warmup-suffices} and \Cref{lem:warmup-RHS-suffices}. We can now use this to wrap up the proof of~\Cref{thm:warm-up}. 

\begin{proof}[Proof of~Theorem~\ref{thm:warm-up}]
	
	First, we consider the case $\Delta \geq \sqrt{n\log n}$.  In this case, we use \Cref{alg:warmup}, which relies on the adjacency matrix to access the input graph.
	By~\Cref{lem:warmup-runtime} we know that this algorithm has expected runtime $O(\frac{n^2}{\Delta} \cdot \log{\Delta})$ which is $O(n\sqrt{n\cdot \log(n)})$ by the lower bound on $\Delta$.

	If $\Delta < \sqrt{n\log n}$ we can use the standard deterministic greedy algorithm for vertex coloring which has linear runtime of $O(n\Delta)$. This algorithm uses an adjacency list to access the graph. 
	This is again $O(n\sqrt{n\log{n}})$ by the upper bound on $\Delta$ in this case, concluding the proof.
\end{proof}

\end{document}